\DeclareOldFontCommand{\rm}{\normalfont\rmfamily}{\mathrm}
\DeclareOldFontCommand{\sf}{\normalfont\sffamily}{\mathsf}
\DeclareOldFontCommand{\tt}{\normalfont\ttfamily}{\mathtt}
\DeclareOldFontCommand{\bf}{\normalfont\bfseries}{\mathbf}
\DeclareOldFontCommand{\it}{\normalfont\itshape}{\mathit}
\DeclareOldFontCommand{\sl}{\normalfont\slshape}{\@nomath\sl}
\DeclareOldFontCommand{\sc}{\normalfont\scshape}{\@nomath\sc}
\theoremstyle{plain}
\newcounter{assumption}
\newtheorem{Assumption}[assumption]{Assumption}
\newcounter{lemma}
\newtheorem{Lemma}[lemma]{Lemma}
\newcommand{\SetAsGlobal}[1]{%
  \global\setbox#1\hbox{\usebox{#1}}%
  }
\newcommand{\makeCsName}[1]{\csname#1\endcsname}
\newcommand{\makeNewSaveBox}[1]{%
   \expandafter\newsavebox\csname#1\endcsname%
   }
\newcommand{\provideSaveBox}[1]{
  \ifcsname#1\endcsname%
   \relax%
  \else%
    \makeNewSaveBox{#1}%
  \fi%
}
\newcommand{\SaveWithName}[2]{%
  \provideSaveBox{privateNameForSaveWithName#1}%
  \SetAsGlobal{\makeCsName{privateNameForSaveWithName#1}}%
  \sbox{\makeCsName{privateNameForSaveWithName#1}}{#2}%
  \SetAsGlobal{\makeCsName{privateNameForSaveWithName#1}}%
}
\newcommand{\UseName}[1]{%
  \ifcsname
privateNameForSaveWithName#1\endcsname%
\usebox{\makeCsName{privateNameForSaveWithName#1}}%
  \else%
    [???]%
  \fi%
}
\newsavebox{\savePremise}
\newenvironment{premise}{%
  \begin{lrbox}{\savePremise}%
  $\begin{array}{c}%
  }{%
  \end{array}$%
  \end{lrbox}%
  \SetAsGlobal{\savePremise}
  }
\newsavebox{\saveConsequence}
\newenvironment{consequence}{%
  \begin{lrbox}{\saveConsequence}%
  $\begin{array}{l}%
  }{%
  \end{array}$%
  \end{lrbox}%
  \SetAsGlobal{\saveConsequence}
  }
\newsavebox{\saveSideCondition}
\newenvironment{sideCondition}{%
  \begin{lrbox}{\saveSideCondition}%
  $\begin{array}{l}%
    \mbox{with }\\[-0.5ex]
    \,\,\begin{array}{l}
  }{%
  \end{array}
  \end{array}$%
  \end{lrbox}%
  \SetAsGlobal{\saveSideCondition}
  }
\newcommand{\emptyPremise}{
  \begin{lrbox}{\savePremise}%
  \end{lrbox}%
  \SetAsGlobal{\savePremise}
  }
\newcommand{\emptyConsequence}{
  \begin{lrbox}{\saveConsequence}%
  \end{lrbox}%
  \SetAsGlobal{\saveConsequence}
  }
\newcommand{\emptySideCondition}{
  \begin{lrbox}{\saveSideCondition}%
  \end{lrbox}%
  \SetAsGlobal{\saveSideCondition}
  }
\newsavebox{\saveMetaRule}
\newcommand*{\metaRuleName}{1}
\newcommand*{\metaRuleNameAux}{1}
\newcommand\metaRuleScale{1}
\newcommand{\DisplayMetaRule}[4]{
  \ensuremath{
    {\tiny{\textsc{(#1)}}}
    \displaystyle
	\frac{{#2}}{{#3}}{\scalebox{\metaRuleScale}[\metaRuleScale]{#4}}
    }
  }
\newcommand\DisplayRuleName[1]{
  \!\!\!\!%
  \begin{array}[c]{l}%
    \rotatebox{90}{\scalebox{0.9}{\textsc{(#1)}}}%
  \!\!\!\!\!\!%
  \end{array}%
  }
\newcommand\DisplayPremise[1]{%
  \begin{array}{c}%
  #1%
  \\[0.5ex]%
  \end{array}%
  }
\newcommand\DisplayConsequence[1]{%
  \begin{array}{c}%
  \\[-2ex]%
  #1%
  \end{array}%
  }
\newcommand{\DisplayMetaRuleBis}[4]{
\begin{Scaled}{\metaRuleScale}{\metaRuleScale}
\ensuremath{%
\!\!\!\!\!\!\!\!\!\begin{array}{l}
\DisplayRuleName{#1}
\begin{array}{l}%
\frac{%
  \!\!\!\!\DisplayPremise{#2}\!\!\!\!%
}{%
  \!\!\!\!\DisplayConsequence{#3}\!\!\!\!%
}\\[-0.5ex]%
\scalebox{0.95}{\ensuremath{#4}}%
\end{array}
\end{array}
}
\end{Scaled}
}
\newsavebox{\saveScaled}
\newcommand*{\varScalableEnvironmentWidth}{1}
\newcommand*{\varScalableEnvironmentHeight}{1}
\newenvironment{Scaled}[2]{%
  \renewcommand*{\varScalableEnvironmentWidth}{#1}
  \renewcommand*{\varScalableEnvironmentHeight}{#2}
  \begin{lrbox}{\saveScaled}
  }{
  \end{lrbox}
  \noindent\scalebox{\varScalableEnvironmentWidth}[\varScalableEnvironmentHeight]{\usebox{\saveScaled}}
  }
\newif\ifsubmit
\newcommand{\EZComm}[1]{} 
\newcommand{\MSComm}[1]{} 
\newcommand{\APComm}[1]{} 
\newcommand{\KJXComm}[1]{} 
\newcommand{\MAZComm}[1]{}
  \renewcommand{\AC}[1]{#1}%
  \newcommand{\AC}[1]{#1}%
\newcommand{\ACComm}[1]{} 
\newcommand{\NoteText}[1]{{#1}}
\newcommand{\NoteComm}[3]{\NoteText{\scriptsize \textcolor{#1}{[#2{:} #3]}}}
\newcommand{\EZComm}[1]{\NoteComm{blue}{Elenna}{#1}}
\newcommand{\MSComm}[1]{\NoteComm{green}{Marco}{#1}}
\newcommand{\APComm}[1]{\NoteComm{purple}{Alex}{#1}}
\newcommand{\KJXComm}[1]{\NoteComm{brown}{kjx}{#1}}
\newcommand{\MAZComm}[1]{\NoteComm{purple}{Max}{#1}}
  \renewcommand{\AC}[1]{\textcolor{orange}{#1}}%
  \newcommand{\AC}[1]{\textcolor{orange}{#1}}%
\newcommand{\ACComm}[1]{\NoteComm{orange}{Andrea}{#1}}
\newenvironment{grammatica}{$\begin{array}[t]{lcll}}{\end{array}$}
\newcommand\produzioneDef{%
  \Coloneqq
}
\newcommand{\produzione}[3]{%
\!\!\!#1%
&%
\!\!\!\produzioneDef\!\!\!%
&%
#2%
&%
\!\!\!\mbox{{\small{#3}}}%
}
\newcommand{\seguitoProduzione}[2]{%
&&%
#1 
&%
\!\!\!\mbox{{\small{#2}}}%
}
\newcommand{\Terminale}[1]{%
\ensuremath{%
\mbox{%
  \textbf{\texttt{#1}}%
  }%
}%
}
\newcommand{\NonTerminale}[1]{\ensuremath{\mathit{#1}}\xspace}
\definecolor{metaVarColor}{rgb}{0.65,0.1,0.1}
\def\V#1{{#1}}
\def\Inverse#1{{\colorbox{black}{\color{white}\def\Inverse##1{{##1}}\def\color##1{}#1}}}
\newcommand*{\Aux}[1]{{\scalebox{0.8}[0.8]{\sf #1}}\xspace}
\newcommand{\Many}[1]{
  \hspace{0.03ex}
  \overline{\hspace{-0.03ex}#1\hspace{-0.03ex}}
  \hspace{0.03ex}
  }
  \newtheorem{theorem}{{\bf Theorem}}%
  \renewenvironment{proof}{{\noindent{\it Proof}.}}{}%
  \newenvironment{proof}{{\noindent{\it Proof}.}}{}%
    \newcommand{\qedhere}{\qed}%
    \newcommand{\qedhere}{\ensuremath{\square}}%
  \newcommand{\qedhere}{\ensuremath{\square}}%
\tikzstyle{every picture}+=[remember picture]
\newcommand{\NamePoint}[1]{\tikz\coordinate(#1);}
\newcommand{\RectangleOpen}[1]{%
\NamePoint{#1}%
  }
\definecolor{MyGrey}{rgb}{0.5,0.5,0.5}
\definecolor{MyRed}{rgb}{1,0.1,0.1}
\newcommand{\MiniRectangleClose}[3]{%
\NamePoint{#3}%
\DrawMiniRectangle{#1}{#2}{#3}%
}
\newcommand{\DrawMiniRectangle}[3]{%
\begin{tikzpicture}[overlay]%
\draw [#1,rounded corners=1pt,opacity=0.2,fill=#1]($(#2)+(-0.1ex,1.45ex)$)%
rectangle($(#3)+(-0.05ex,-0.15ex)$);%
\end{tikzpicture}%
}
\newcommand{\mO}{\RectangleOpen{privateMO}}
\newcommand{\mC}{\MiniRectangleClose{MyRed}{privateMO}{privateMC}}
\newcommand\sizeTitleBoxTitle{}
\newenvironment{SizeTitleBox}[2]{%
\renewcommand\sizeTitleBoxTitle{#2}%
\begin{Sbox}%
\begin{minipage}{#1}%
}{%
\end{minipage}%
\end{Sbox}%
\begin{tikzpicture}%
\node [mybox] (box){\TheSbox};%
\node[myTitle, right=10pt,rounded corners] at (box.north west) {\sizeTitleBoxTitle};%
\end{tikzpicture}%
}%
\newenvironment{SizeBox}[1]{%
\begin{Sbox}%
\begin{minipage}{#1}%
}{%
\end{minipage}%
\end{Sbox}%
\begin{tikzpicture}%
\node [mybox] (box){\TheSbox};%
\end{tikzpicture}%
}
\newenvironment{ensureMathEnv}{%
\begin{Sbox}%
}{%
\end{Sbox}%
\ensuremath{\TheSbox}%
}
\definecolor{myBlueL}{rgb}{1,1,1}
\definecolor{myBlueD}{rgb}{0.2,0.2,0.6}
\definecolor{myRedL}{rgb}{1,0.85,0.85}
\definecolor{myRedD}{rgb}{0.8,0.15,0.15}
\definecolor{myBackground}{rgb}{0.8,0.8,1}
\definecolor{greyCodeBg}{RGB}{245,245,245}
\definecolor{greyCodeLine}{RGB}{200,200,200}
\newcommand\HandoutOnly[1]{}
\newcommand{\TerminaleRed}[1]{
\ensuremath{
{\mbox{\mO\lstinline@#1@\mC}}
}\xspace}
\newcommand{\HSep}{\hbox to \columnwidth{\bf\hrulefill}}
\newcommand{\HSepLong}{\hbox to \textwidth{\bf\hrulefill}}
\newcommand{\myCalBig}[1]{
    {%
    \scalebox{1.1}[1.1]{%
      $\mathcal{#1}$%
    }%
    }%
}
\newcommand{\ctx}{\V{\myCalBig{E}}}
\providecommand\T{}
\renewcommand*{\T}{\V{\NonTerminale{T}}}
\newcommand*{\x}{\V{\NonTerminale{x}}}
\providecommand\f{}
\renewcommand*{\f}{\NonTerminale{f}}
\newcommand*{\es}{\Many\e}
\newcommand*{\e}{\V{\NonTerminale{e}}}
\providecommand\m{}
\renewcommand*{\m}
{\V{\NonTerminale{m}}}
\newcommand*{\mdf}{%
  \V\mu
}
\newcommand*{\semiColon}{\semicolon}
\newcommand*{\semicolon}{\Terminale{;}}
\newcommand*{\commaSign}{\Terminale{,}}
\newcommand\comma\commaSign
\newcommand*{\oRound}{\Terminale{(}}
\newcommand*{\cRound}{\Terminale{)}}
\newcommand*{\oCurly}{\Terminale{\{}}
\newcommand*{\cCurly}{\Terminale{\}}}
\newcommand*{\oR}{\oRound}
\newcommand*{\cR}{\cRound}
\newcommand*{\oC}{\oCurly}
\newcommand*{\cC}{\cCurly}
\newcommand*{\singleDot}{{\Terminale{.}}}
\newcommand{\postApex}
  {\Terminale{\ensuremath{\raisebox{0.2em}{\mbox{\Q@^@}}}}}
\newcommand*{\PostMdf}[1]{\ensuremath{\mbox{\Q@`@}{}^{#1}}}
\newcommand*{\PreMdf}[1]{\ensuremath{\mbox{\Q@'@}{}^{#1}}}
\renewcommand\Case[1]{\raisebox{0.15ex}{\scalebox{0.65}{\texttt{\textbf{\!\!\!\!\!(\!#1\!)}}}}}%
\newcommand\Case[1]{\raisebox{0.15ex}{\scalebox{0.65}{\texttt{\textbf{\!\!\!\!\!(\!#1\!)}}}}}%
\newcommand*{\phfj}{\textsc{FJ}\ensuremath{{}^{\mbox{\Q@^@\!}}}}
\newcommand{\ExampleAndComment}[4]{
  \begin{minipage}{0.25\linewidth}
  \end{minipage}
  \begin{Scaled}{0.85}{0.85}
  \begin{minipage}{0.85\linewidth}
  \end{minipage}
  \end{Scaled}
  }
\newcommand\Opt[1]{\V{{%
  \hspace{0.03ex}
  \overbracket[0.10ex][0.3ex]{\hspace{-0.03ex}#1\hspace{-0.03ex}}}}%
\hspace{0.03ex}
}
\newlength\stageHeight
\newlength\stageWidth
\newcommand{\equals}{\Terminale{$\customColon{2}$}}
\renewcommand{\equals}{\Kw{ = }}
\newcommand\vs{\Many\Xv}
\newlength\colonwidth
\newlength\colonheight
\newcommand\scaleKw{0.6}
\newcommand\Kw[1]{\scalebox{\scaleKw}[1]{\Terminale{#1}}}
\newcommand\ExtractMType[2]
\newcommand*\ExtractMTypes\ExtractMType
\renewcommand\ldots{\scalebox{0.75}{$...$}}
\edef\myMapsto{\mapsto}
\renewcommand\mapsto{\scalebox{0.75}[1.2]{$\myMapsto\,$}}
\newcommand\FRestriction[2]{{
  \left.\kern-\nulldelimiterspace 
  #1 
  \vphantom{\big|} 
  \right|_{#2} 
  }}
  \renewcommand\block{\V{\cal{B}}}
  \newcommand\block{\V{\cal{B}}}
\definecolor{darkRed}{RGB}{100,0,10}
\definecolor{darkBlue}{RGB}{10,0,100}
\newcommand*{\ttfamilywithbold}{\ttfamily}
\lstdefinelanguage{FortyTwo}[]{Java}{morekeywords={%
  M,
  exception,error,mut,imm,
  read,capsule,lent,assert
  with,in,immutable,trait,using,
  on,var,loop,reuse,method,is
  },
   basicstyle=\ttfamily,
   keywordstyle=\ttfamilywithbold\bfseries\color{darkRed},
   identifierstyle=\idstyle,
   showstringspaces=false,
   mathescape=true,
   xleftmargin=0pt,
   xrightmargin=0pt,
   breaklines=false,
   breakatwhitespace=false,
   breakautoindent=false,
   tabsize=2,
   commentstyle=\color{darkBlue}\ttfamily,
   stringstyle=\color{darkRed}\ttfamily,
   literate=
                 {\%}{{\mbox{\textbf{\%}}}}1
                 {~} {$\sim$}1
 }
\newcommand{\Q}{\lstinline}
\newcommand*\idstyle{%
        \expandafter\id@style\the\lst@token\relax
}
\def\id@style#1#2\relax{%
        \ifcat#1\relax\else
                \ifnum`#1=\uccode`#1%
                        \ttfamilywithbold\bfseries
                \fi
        \fi
}
\newcommand{\saveSpace}{\vspace{-3px}}
\newcommand{\loseSpace}{\vspace{1ex}}
\newcommand{\subheading}[1]{%
	\loseSpace%
	\noindent\textsf{\textbf{\large#1\\\noindent}}
}
\title{Sound Invariant Checking Using Type Modifiers and Object Capabilities.}
\author{Isaac Oscar Gariano}{Victoria University of Wellington}{isaac@ecs.vuw.ac.nz}{}{}
\author{Marco Servetto}{Victoria University of Wellington}{marco.servetto@ecs.vuw.ac.nz}{}{}
\author{Alex Potanin}{Victoria University of Wellington}{alex@ecs.vuw.ac.nz}{}{}
\authorrunning{Isaac O.\,G., M. Servetto, and A. Potanin} 
\subjclass{\ccsdesc[500]{Theory of computation~Invariants}, 
	\ccsdesc[500]{Theory of computation~Program verification}, 
	\ccsdesc[500]{Software and its engineering~Object oriented languages}}
\keywords{type modifiers, object capabilities, runtime verification, class invariants}
\let\origSingleDot=\singleDot 
\renewcommand{\singleDot}{\kern-1.2pt\origSingleDot\kern-1.8pt} 
\newcommand{\defActiveMathChar}[2]{%
	\begingroup\lccode`~=`#1\relax%
	\lowercase{\endgroup\def~}{#2}%
	\AtBeginDocument{\mathcode`#1="8000}%
}
\newcommand{\M}[3]{\ensuremath{\Kw{M}\oR{}#1\semiColon{}#2\semiColon{}#3\cR}}
\let\origMapsTo=\mapsto 
\renewcommand{\mapsto}{\mathrel{\origMapsTo}}
\newcommand{\invariant}{\Kw{invariant}\oR\cR}
\newcommand{\thm}[1]{\scalebox{0.9}[0.9]{\sf #1}}
\begin{document}
\maketitle
\begin{abstract}
In this paper we use pre existing language support for type modifiers and object capabilities to enable a system for sound runtime verification of invariants.
Our system guarantees that class invariants hold for all objects involved in execution.
Invariants are specified simply as methods whose execution is statically guaranteed to be deterministic and not access any externally mutable state.
We automatically call such invariant methods only when objects are created or the state they refer to may have been mutated.
Our design restricts the range of expressible invariants but improves upon the usability and performance of our system compared to prior work.
In addition, we soundly support mutation, dynamic dispatch, exceptions, and non determinism, while requiring only a modest amount of annotation.

We present a case study showing that our system requires a lower annotation burden compared to Spec\#, and  performs orders of magnitude less runtime invariant checks compared to the widely used `visible state semantics' protocols of D, Eiffel.
We also formalise our approach and prove that such pre existing type modifier and object capability support is sufficient to ensure its soundness.
\end{abstract}

\section{Introduction}
\label{s:intro}

Object oriented programming languages provide great flexibility through subtyping and dynamic dispatch: they
allow code to be adapted and specialised to behave differently in different contexts.
However this flexibility hampers code reasoning, since object behaviour is usually nearly completely
unrestricted. This is further complicated with the support OO languages typically have for exceptions,
memory mutation, and I/O.

Class invariants are an important concept when reasoning about software correctness.
They can be presented as documentation, checked as part of static verification, or, as we do in this paper, monitored for violations using runtime verification.
In our system, a class specifies its invariant by defining a boolean method called \Q@invariant@.
We say that an object's invariant holds when its \Q@invariant@ method would return \Q@true@. 
We do this, like Dafny~\cite{DBLP:conf/sigada/Leino12}, to minimise special syntactic and type-system treatment of invariants, making them easier to understand for users. Whereas most other approaches treat invariants as a special annotation with its own syntax.

An \emph{invariant protocol}~\cite{FlexibleInvariants} specifies when invariants need to be checked, and when they can be assumed; if such checks guarantee said assumptions, the protocol is sound.
The two main sound invariant protocols present in literature are \emph{visible state semantic} \cite{Meyer:1988:OSC:534929} and the \emph{Boogie/Pack-Unpack methodology}~\cite{DBLP:journals/jot/BarnettDFLS04}. The visible state semantics expect the invariants of receivers to hold before and after every public method call, and after constructors. Invariants are simply checked at all such points, thus this approach is obviously sound; however this can be incredibly inefficient, even in simple cases.
In contrast, the pack/unpack methodology marks all objects as either \emph{packed} or \emph{unpacked}, where a packed object is one whose invariant is expected to hold.
In this approach, an object's invariant is checked only by the pack operation.
In order for this to be sound, some form of aliasing and/or mutation control is necessary. For example, Spec\#, which follows the pack/unpack methodology, uses a theorem prover, together with source code annotations.
While Spec\# can be used for full static verification, it conveniently allows invariant checks to be performed
at runtime, 
whilst statically verifying aliasing, purity and other similar standard properties.
This allows us to closely compare our approach with Spec\#.

Instead of using automated theorem proving, 
it is becoming more popular to verify aliasing and immutability using a type system.
For example, three languages: L42~\cite{ServettoZucca15,ServettoEtAl13a,JOT:issue_2011_01/article1,GianniniEtAl16}, Pony~\cite{clebsch2015deny,clebsch2017orca}, and the language of Gordon et.~al.~\cite{GordonEtAl12} use Type Modifiers (TMs) and Object Capabilities (OCs) to ensure safe and deterministic parallelism.%
\footnote{TMs are called \emph{reference capabilities} in other works. We use the term TM here
to not confuse them with object capabilities, another technique we also use in this paper.}
While studying those languages, we discovered an elegant way to enforce invariants.

We use the guarantees provided by these systems to ensure that that at all times, if an object is usable in execution, its invariant holds. What this means is that if you can do anything with an object, such as by using it as an argument/receiver of a method call, we know that the invariant of it, and all objects reachable from it, holds. In order to achieve this, we use TMs and OCs to restrict how the result of invariant methods may change, this is done by restricting I/O as well as what state the invariant can refer to and what can alias/mutate such state.  We use these restrictions to reason as to when an object’s invariant could have been violated, and when such object can next be used, we then inject a runtime check between these two points. See Section \ref{s:protocol} for the exact details of our invariant protocol.

\subheading{Example}
Here we show an example illustrating our system in action. Suppose we have a \Q@Cage@ class which contains a \Q@Hamster@; the \Q@Cage@ will move its \Q@Hamster@ along a path. We would like to ensure that the \Q@Hamster@ does not deviate from the path. We can express this as the invariant of \Q@Cage@: the position of the \Q@Cage@'s \Q@Hamster@ must be within the path (stored as a field of \Q@Cage@).

%
\begin{lstlisting}
class Point { Double x; Double y; Point(Double x, Double y) {..}
  @Override read method Bool equals(read Object that) {
    return that instanceof Point &&
      this.x == ((Point)that).x && this.y == ((Point)that).y; }
}
class Hamster {Point pos; //pos is imm by default
  Hamster(Point pos) {..} 
}
class Cage {
  capsule Hamster h;
  List<Point> path; //path is imm by default
  Cage(capsule Hamster h, List<Point> path) {..}
  read method Bool invariant() {
    return this.path.contains(this.h.pos); }
  mut method Void move() {
    Int index = 1 + this.path.indexOf(this.h.pos));
    this.moveTo(this.path.get(index % this.path.size())); }
  mut method Void moveTo(Point p) { this.h.pos = p; }
}
\end{lstlisting}

Many verification approaches take advantage of the separation between primitive/value types and objects, since the former are immutable and do not support reference equality.
However, our approach works in a pure OO setting without such a distinction. Hence we write all type names in \Q@BoldTitleCase@ to underline this. Note: to save space, here and in the rest of the paper we omit the bodies of constructors that simply initialise fields with the values of constructor parameters, but we show their signature in order to show any annotations.

We use the \Q@read@ annotation on \Q@equals@ to express that it does not modify either the
receiver or the parameter. In \Q@Cage@ we use 
the \Q@capsule@ annotation to ensure
that the \Q@Hamster@'s \emph{reachable object graph} (ROG) is fully under the control
of the containing \Q@Cage@. 
We annotated the \Q@move@
and \Q@moveTo@ methods with \Q@mut@, since they modify
their receivers ROG. The default annotation is always \Q@imm@, thus \Q@Cage@'s \Q@path@ field is a deeply immutable list of \Q@Point@s.
Our system performs runtime checks for the invariant
at the end of \Q@Cage@'s constructor, \Q@moveTo@ method, and after any update to one of its fields.
The \Q@moveTo@ method is the only one that may (directly) break the \Q@Cage@'s invariant. However, there is only a single occurrence of \Q@this@ and it is used to read the \Q@h@ field. We use the guarantees of TMs to ensure that no alias to \Q@this@ could be reachable from either \Q@h@ or the immutable \Q@Point@ parameter. Thus, the potentially broken \Q@this@ object is not visible while the \Q@Hamster@'s position is updated. 
The invariant is checked at the end of the \Q@moveTo@ method, just before \Q@this@ would become visible again.
This technique loosely corresponds to an implicit pack and unpack: we use \Q@this@ only to read the field value, then we work on its value while the invariant of \Q@this@ is not known to hold, finally we check the invariant before allowing the object to  be used again.

Note: since only \Q@Cage@ has an invariant,
 only \Q@Cage@ has special restrictions, allowing the code for \Q@Point@ and \Q@Hamster@ to be unremarkable.
 This is not the case in Spec\#: all code involved in  verification needs to be designed with verification in mind~\cite{barnett2011specification}.

\subheading{Spec\# Example} Here we show the previous example in Spec\#, the system most similar to ours (see appendix \ref{s:hamster} for a more detailed discussion about this solution):
\begin{lstlisting}[
language={[Sharp]C}, morekeywords={invariant,ensures,requires,expose,exists}]
// Note: assume everything is `public'
class Point { double x; double y; Point(double x, double y) {..}
  [Pure] bool Equal(double x, double y) {
    return x == this.x && y == this.y; }
}
class Hamster{[Peer]Point pos; 
  Hamster([Captured]Point pos){..}
}
class Cage {
  [Rep] Hamster h; [Rep, ElementsRep] List<Point> path;
  Cage([Captured] Hamster h, [Captured] List<Point> path)
    requires Owner.Same(Owner.ElementProxy(path), path); {
      this.h = h; this.path = path; base(); }
  invariant exists {int i in (0 : this.path.Count);
    this.path[i].Equal(this.h.pos.x, this.h.pos.y) };
  void Move() {
    int i = 0;
    while(i<path.Count && !path[i].Equal(h.pos.x,h.pos.y)){i++;}
    expose(this) {this.h.pos = this.path[i%this.path.Count];}}
}
\end{lstlisting}

In both versions, we designed \Q@Point@ and \Q@Hamster@ in a general way, and not solely to be used by classes with an invariant, in particular \Q@Point@ is not an immutable class. However, doing this in Spec\# proved difficult, in particular we were unable to override \Q@Object.Equals@, or even define a usable \Q@equals@ method that takes a \Q@Point@, as such we could not call either \Q@List<Point>.Contains@ or \Q@List<Point>.IndexOf@.
 
Even with all of the above annotations, we still needed special care in creating \Q@Cage@s:\vspace{-1.860px}
\begin{lstlisting}[
%basicstyle=\footnotesize,
language={[Sharp]C}, morekeywords={invariant,ensures,requires,expose,exists}]
List<Point> pl = new List<Point>{new Point(0,0),new Point(0,1)};
Owner.AssignSame(pl, Owner.ElementProxy(pl));
Cage c = new Cage(new Hamster(new Point(0, 0)), pl);
\end{lstlisting}

Whereas with our system we can simply write:
\begin{lstlisting}
List<Point> pl = List.of(new Point(0, 0), new Point(0, 1));
Cage c = new Cage(new Hamster(new Point(0, 0)), pl);
\end{lstlisting}

In Spec\# we had to add $10$ different annotations, of $8$ different kinds; some of which were quite involved. In comparison, our approach requires only $7$ simple keywords, of $3$ different kinds; however we needed to write 
a separate \Q@moveTo@ method, since we do not want to burden our language with extra constructs such as Spec\#'s \Q@expose@.
\subheading{Summary}
We have fully implemented our protocol in L42\footnote{A suitably anonymised, experimental version of L42, supporting the protocol described in this paper, together with the full code of our case studies, is available at \url{http://l42.is/EcoopArtifact.zip}.}\footnote{We also believe it would be easy to implement our protocol in Pony and Gordon et.~al.'s language.}, we used this implementation to implement and test an interactive GUI involving a class with an invariant. On a test case with $5$ objects with an invariant, 
our protocol performed only $77$ invariant checks, whereas the visible state semantic invariant protocols of D and Eiffel perform $53$ and $14$ million checks (respectively). See Section \ref{s:case-study} for an explanation of these result.
We also compared with Spec\#, whose invariant protocol performs the same number of checks as ours, however the annotation burden was almost $4$ times higher than ours.

In this paper we argue that our protocol is not only more succinct than the pack/unpack approach, but is also easier and safer to use.
Moreover, our approach deals with more scenarios than most prior work: we allow sound catching of invariant failures and also carefully handle non deterministic operations like I/O.
Section \ref{s:TMsAndOCs} explains the \emph{type modifier} and \emph{object capability} support we use for this work.
Section \ref{s:protocol} explains the details of our invariant protocol, and section \ref{s:formalism} formalises a language enforcing this protocol.
Sections \ref{s:immutable} and \ref{s:encapsulated}, respectively, explain and motivate how our protocol can handle invariants over immutable and encapsulated data.
Section \ref{s:case-study} presents our GUI case study and compares it against visible state semantics and Spec\#.
Sections \ref{s:related} and \ref{s:conclusion} provide related work and conclusions.

Appendix \ref{s:proof} provides a proof that our invariant protocol is sound. Appendices \ref{s:hamster} and \ref{s:MoreCaseStudies} provide further
case studies and comparisons against Spec\#, D and Eiffel.


\section{Type Modifiers and Object Capabilities}
\label{s:TMsAndOCs}
Reasoning about imperative object oriented (OO) programs is a non trivial task,
made particularly difficult by mutation, aliasing, dynamic dispatch, I/O, and exceptions. There are many ways to perform such reasoning, here we use the type system to restrict, but not prevent such behaviour in order to be able to soundly enforce invariants with runtime verification (RV).

\subheading{Type Modifiers (TMs)}
TMs, as used in this paper, are a type system feature that allows reasoning about aliasing and mutation. Recently a new design for them has emerged that radically improves their usability;
three different research languages are being independently developed relying on this new design: the language of Gordon et.~al.~\cite{GordonEtAl12}, Pony~\cite{clebsch2015deny,clebsch2017orca}, and L42~\cite{ServettoZucca15,ServettoEtAl13a,JOT:issue_2011_01/article1,GianniniEtAl16}.
These projects are quite large: several million lines of code are written in Gordon et.~al.'s language and are used by a large private Microsoft project; Pony and L42 have large libraries and are active open source projects. In particular the TMs of these languages are used to provide automatic and correct parallelism~\cite{GordonEtAl12,clebsch2015deny,clebsch2017orca,ServettoEtAl13a}.

While we focus on the specific TMs provided by L42, Pony, and Gordon et.~al., type modifiers
 are a well known language mechanism~\cite{TschantzErnst05,BirkaErnst04,OstlundEtAl08,clebsch2015deny,GianniniEtAl16,GordonEtAl12}
 that allow statically reasoning about mutability and aliasing properties of objects.
With slightly different names and semantics, the four most common modifiers for references to objects are:
\begin{itemize}
\item Mutable (\Q@mut@): the referenced object can be mutated, as in most imperative languages without modifiers.
If all types are \Q@mut@, there is no restriction on aliasing/mutation.
\item Readonly (\Q@read@): the referenced object cannot be mutated by such references, but there may be mutable aliases to such object, thus mutation can still be observed. 
\item Immutable (\Q@imm@): the referenced object can never mutate. Like \Q@read@ references, one cannot mutate through an \Q@imm@ reference, however \Q@imm@ references also guarantee that the referenced object will not mutate through any other alias.
\item Encapsulated (\Q@capsule@):
 everything in the reachable object graph (ROG) of a capsule reference (including itself) is mutable only through that reference; however immutable references can be freely shared across capsule boundaries.
\end{itemize}

\noindent TMs are different to field or variable modifiers like Java's \Q@final@: TMs apply to references, whereas \Q@final@ applies to fields themselves. Unlike a variable/field of a \Q@read@ type, a \Q@final@ variable/field cannot be reassigned, it always refers to the same object, however the variable/field can still be used to mutate the referenced object.
On the other hand, an object cannot be mutated through a \Q@read@ reference, however a \Q@read@ variable can still be reassigned.\footnote{In C, this is similar to the difference between \Q@A* const@ (like \Q@final@) and \Q@const A*@ (like \Q@read@), where \Q@const A* const@ is like \Q@final read@.}

Consider the following  example usage of \Q@mut@, \Q@imm@, and \Q@read@, where we can observe a change in \Q@rp@ caused by a mutation inside \Q@mp@.
\begin{lstlisting}
mut Point mp = new Point(1, 2);
mp.x = 3; // ok
imm Point ip = new Point(1, 2);
$\Comment{}$ip.x = 3; // type error
read Point rp = mp; // ok, read is a common supertype of imm/mut
$\Comment{}$rp.x = 3; // type error
mp.x = 5; // ok, now we can observe rp.x == 5
ip = new Point(3, 5); // ok, ip is not final
\end{lstlisting}

There are several possible interpretations of the semantics of type modifiers.
Here we assume the full/deep meaning~\cite{ZibinEtAl10,Potanin2013}:
\begin{itemize}
  \item the objects in the ROG of an immutable object are immutable,
  \item a mutable field accessed from a \Q@read@ reference produces a \Q@read@ reference,
  \item no casting/promotion from \Q@read@ to \Q@mut@ is allowed.
\end{itemize}

\noindent There are many different existing techniques and type systems that handle the modifiers above~\cite{ZibinEtAl10,ClarkeWrigstad03,HallerOdersky10,GordonEtAl12,ServettoZucca15}.
The main progress in the last few years is with the flexibility of such type systems:
 where the programmer should use \Q@imm@ when  representing immutable data
and \Q@mut@ nearly everywhere else. The system will be able to transparently promote/recover~\cite{GordonEtAl12,clebsch2015deny,ServettoZucca15} the type modifiers, adapting them to their use context.
To see a glimpse of this flexibility, consider the following example:
\begin{lstlisting}
    mut Circle mc = new Circle(new Point(0, 0), 7);
capsule Circle cc = new Circle(new Point(0, 0), 7);
    imm Circle ic = new Circle(new Point(0, 0), 7);
\end{lstlisting}

Here \Q@mc@, \Q@cc@, and \Q@ic@ are syntactically initialised with the same expression: \Q@new Circle(..)@.
The \Q@new@ expression returns a \Q@mut@, so \Q@mc@ is obviously ok.
Moreover, the expression does not use any \Q@mut@ local variables, thus the flexible TM system
allows the \Q@mut@ result to be promoted to \Q@capsule@, thus \Q@cc@ is ok. 
Additionally, a \Q@capsule@ can be implicitly converted to \Q@imm@, thus \Q@ic@ is also ok.
We want to emphasise that this is not a special feature of \Q@new@ expressions:
any expression of a \Q@mut@ type that uses no free \Q@mut@ variables declared outside can be implicitly promoted to \Q@capsule@/\Q@imm@.\footnote{%
This requires some restrictions on \Q@read@ fields not discussed in detail for lack of space.
} This is the main improvement on the flexibility of TMs in recent literature~\cite{ServettoEtAl13a,ServettoZucca15,GordonEtAl12,clebsch2015deny,clebsch2017orca}.
Former work~\cite{Boyland10,boyland2003checking,Hogg91,Smith:2000:AT:645394.651903,DBLP:conf/pldi/AikenFKT03}, which eventually enabled the work of Gordon et.~al., does not consider promotion and 
infers uniqueness/isolation/immutability only when starting from references that have been tracked with restrictive annotations along their whole lifetime.
From a usability perspective, this improvement means that
these TMs are opt-in: a programmer can write large sections of code
mindlessly using \Q@mut@ types and be free to have rampant aliasing. 
Then, at a later stage, another programmer may still 
be able to encapsulate those data structures into an \Q@imm@ or \Q@capsule@ reference.





The \Q@capsule@ modifier (sometimes called isolated/\Q@iso@) is possibly the one whose details differ the most in literature. Here we refer to the interpretation of~\cite{GordonEtAl12}, that introduced the concept of recovery/promotion.
This concept is the basis for L42, Pony, and Gordon et.~al.'s type systems~\cite{GordonEtAl12,ServettoEtAl13a,ServettoZucca15,ServettoEtAl13a,clebsch2015deny,clebsch2017orca}. 



The capsule/isolated fields of Gordon et.~al. and Pony rely on destructive reads~\cite{GordonEtAl12,clebsch2015deny}: in order to read them, a new value (such as \Q!null!) will be assigned to them. In contrast, L42~\cite{ServettoEtAl13a,ServettoZucca15} does not require such destructive reads, thus \Q@capsule@ fields can be accessed many times, and their content can be seen from outside; but only in controlled ways.
Both Gordon et.~al. and Pony restrict how \Q!capsule! local variables can be used by changing the type they are seen as, however both allow the local variable to be `consumed', allowing them to be used as normal capsule/isolated expressions, at the cost of being unable to use the variable again. L42 however uses a simpler approach where all accesses to \Q!capsule! local variables consume them: they are expressed using linear/affine types~\cite{boyland2001alias}, thus they can only be used once.





\subheading{Exceptions}\label{s:exceptions}
In most languages exceptions may be thrown at any point; combined with mutation this complicates reasoning about the state of programs after exceptions are caught: if an exception was thrown whilst mutating an object, what state is that object in? Does its invariant hold?
The concept of \emph{strong exception safety} (SES)~\cite{Abrahams2000,JOT:issue_2011_01/article1} simplifies reasoning:
if a \Q@try@--\Q@catch@ block caught an exception, the state visible before execution of the \Q@try@ block is unchanged, and the exception object does not expose any object that was being mutated.
L42 already enforces SES for unchecked exceptions.\footnote{%
This is needed to support safe parallelism. Pony takes a more drastic approach and does not support exceptions in the first place. 
We are not aware of how Gordon et.~al. handles exceptions, however in order for it to have sound unobservable parallelism it must have some restrictions.%
}
L42 enforces SES using TMs in the following way:\footnote{Transactions are another way of enforcing strong exception safety, but they require specialized and costly run time support.}\footnote{A formal proof of why these restriction are sufficient is presented in the work of Lagorio~\cite{JOT:issue_2011_01/article1}.}
\begin{itemize}
\item Code inside a \Q@try@ block capturing unchecked exceptions is typed as if all \Q@mut@ variables declared outside of the block are \Q@read@.
\item Only \Q@imm@ objects may be thrown as unchecked exceptions.
\end{itemize} 

\noindent This strategy does not restrict throwing exceptions, but only catching unchecked ones.
SES allows us to soundly capture invariant failures as unchecked exceptions: 
the broken object is guaranteed to be garbage collectable when the exception is captured. For the purposes of soundly catching invariant failures, it would be sufficient to enforce SES only when capturing exceptions caused by such failures.



%




\subheading{Object Capabilities (OCs)}
OCs, which L42, Pony, and Gordon et.~al.'s work have, are a widely used~\cite{miller2003capability,
noble2016abstract,karger1988improving} programming style that allows associating resources with objects. When this style
is respected, code that does not possess an alias to such an object cannot use its associated resource.
Here, as in Gordon et.~al.'s work, we use OCs to reason about determinism and I/O. To properly enforce this, the OC style needs to be respected while implementing the primitives of the standard library and when performing foreign function calls that could be non deterministic, such as operations that read from files or generate random numbers. Such operations would not be provided by static methods, but instead instance methods of classes whose instantiation is kept under control. 



\lstset{language=Java}
 For example, in Java, \Q@System.in@
 \lstset{language=FortyTwo} 
  is a \emph{capability object} that provides access to the standard input resource, however, as it is globally accessible it completely prevents reasoning about determinism. 

In contrast, if Java were to respect the object capability style, the \Q@main@ method could take a \Q@System@ parameter, as in
 \Q@main(mut System s)@
 \lstset{language=Java}
\Q@{.. s.in.read() ..}@. \lstset{language=FortyTwo}%
Calling methods on that \Q@System@ instance would be the only way to perform I/O;
moreover, the only \Q@System@ instance would be the one created by the runtime system before calling \Q@main@. 
This design has been explored by Joe-E~\cite{finifter2008verifiable}.
OCs are typically not part of the type system nor do they require runtime checks or special support beyond that provided by a memory safe language. However, since
L42 allows user code to perform foreign calls without going through a predefined standard library, its type system enforces the OC pattern over such calls:
\begin{itemize}
\item Foreign methods (which have not been whitelisted as deterministic) and methods whose names start with \texttt{\#\$} are \emph{capability methods}.%
\item Constructors of classes declared as \emph{capability classes} are also capability methods.
\item Capability methods can only be called by other capability-methods or \Q@mut@/\Q@capsule@ methods of capability classes.
\item In L42 there is no \Q@main@ method, rather it has several main expressions; such expressions can also call capability methods, thus they can instantiate capability objects and pass them around to the rest of the program.

\end{itemize}

\noindent L42 expects capability methods to be used mostly internally by capability classes, whereas user code would call normal methods on already existing capability objects.

For the purposes of invariant checking, we only care about the effects that methods could have on the running program and heap. As such, \emph{output} methods (such as a \Q@print@ method) can be whitelisted as `deterministic', provided they do not affect program execution, such as by non deterministically throwing I/O errors.

\subheading{Purity}\label{s:purity}
TMs and OCs together statically guarantee that any method with only \Q!read! or \Q!imm! parameters (including the receiver) is \emph{pure}; we define pure
as being deterministic and not mutating existing memory. Such methods are pure because:
\begin{itemize}
	\item the ROG of the parameters (including \Q!this!) is only accessible as \Q@read@ (or \Q@imm@), thus it cannot be mutated\footnote{This is even true in the concurrent environments of Pony and Gordon, since they ensure that no other thread/actor has access to a \Q@mut@/\Q@capsule@ alias of \Q@this@. 
	Thus, since such methods do not write to memory accessible by another thread, nor read memory that could be mutated by another thread, they are atomic.},
	\item if a capability object is in the ROG of any of the arguments (including the receiver), then it can only be accessed as \Q@read@, preventing calling any non deterministic (capability) methods,
	\item no other preexisting objects are accessible (as L42 does not have global variables).\footnote{%
		If L42 did have static variables, getters and setters for them would be capability methods.
		Even allowing unrestricted access to \Q@imm@
		static variables would prevent reasoning over
		determinism, due to the possibility of global variable
		updates; however constant/final globals of an 
		\Q@imm@ type would not cause such problems.%
	}
\end{itemize}

\section{Our Invariant Protocol}
\label{s:protocol}
Our invariant protocol guarantees that the whole ROG of any object involved in execution (formally, in a redex) is \emph{valid}: if you can call methods on an object, calling \Q@invariant@ on it is guaranteed to return \Q@true@ in a finite number of steps. However, calls to \Q!invariant! that are generated by our runtime monitoring (see below) can access the fields of a potentially invalid \Q!this!. This is necessary to allow for the \Q!invariant! method to do its job: namely distinguish between valid and invalid objects. However, as for any other method, calls to \Q!invariant! written explicitly by users are guaranteed to have a valid receiver. 

%

For simplicity, in the following explanation and in our formalism
we require 
receivers to always be specified explicitly, and require that the receivers of field accesses and updates are always \Q!this!; that is, all fields are instance private.
We also do not allow explicit constructor definitions, instead we assume constructors are of the standard form \Q@$C$($T_1 x_1$,$\ldots$,$T_n x_n$) {this.$f_1$=$x_1$;$\ldots$;this.$f_n$=$x_n$;}@, where the fields of $C$ are $T_1 f_1;\ldots; T_n f_n;$. This ensures that partially uninitialised (and likely invalid) objects are not passed around or used. 
These restrictions only apply to our formalism; our code examples and the L42 implementation soundly relax these, see below for a discussion.

\subheading{Invariants}
We require that all classes contain a \Q@read method Bool invariant() {..}@, if no \Q!invariant! method is present, a trivial one returning \Q!true! will be assumed. As this method only takes a \Q!read! parameter (the receiver), we can be sure that it is pure \footnote{If the invariant were not pure, it would be nearly impossible to ensure that it would return \Q@true@ at any point.}, as discussed in Section \ref{s:purity}.
The bodies of \Q@invariant@ methods are limited in their usage of \Q@this@: \Q!this! can only be used to access \Q@imm@ and \Q@capsule@ fields. This restriction ensures that 
an invalid \Q@this@ cannot be passed around.
We prevent accessing \Q@mut@ fields since their ROG could be changed by unrelated code (see Section \ref{s:immutable}).
Note that we do not require such fields to be \Q@final@: when a field is updated, we simply check the invariant of the receiver of the update.

\subheading{Capsule mutators}
In order to allow complex mutations of objects with invariants we introduce the notion of \emph{capsule mutator}. A \emph{capsule mutator} can perform an arbitrarily complex mutation of the ROG of a capsule field. We use TMs to ensure that the object containing the capsule field is not usable whilst the fields ROG is mutated, and it's invariant is checked immediately afterwards. 

Formally, \emph{capsule mutators} are \Q@mut@ methods whose body accesses a \Q@capsule@ field mentioned in the invariant of the class containing the field. 
Capsule mutators must use \Q@this@ exactly once in their body, since fields are instance private, this will be to access the \Q!capsule! field.
Excluding the \Q!mut! receiver, such methods cannot have any \Q!mut! or \Q!read! parameters, their return type must not be \Q!mut!, and their \Q!throws! clause must be empty.%
\footnote{%
To allow capsule mutators to leak checked exceptions,
we would need check the invariant
when such exceptions are leaked. However, this would make the runtime semantics of checked exceptions inconsistent with unchecked ones.}.

As capsule mutators use \Q!this! only once, and have no \Q!read! or \Q!mut! parameters, \Q!this! will not be accessible during execution. This is important, as it allows the invariant to be violated part way through the capsule mutator, but re established by the end.
Preventing \Q!mut! return types ensures that such methods cannot leak out a mutable alias to the \Q!capsule! field, which could then be used to break the invariant.
Note that these restrictions do not apply when the receiver of the field access is \Q!capsule!, since we guarantee that the receiver is not in the ROG of any of its \Q!capsule! fields, and hence it can never be seen afterwards.

\subheading{Monitoring}
The language runtime will insert automatic calls to \Q!invariant!, if such a call returns \Q!false!, an unchecked exception will be thrown. Such calls are inserted in the following points:
\begin{itemize}
	\item After a constructor call, on the newly created object.
	\item After a field update, on the receiver.
	\item After a capsule mutator method returns, on the receiver of the method\footnote{The invariant is not checked if the call was terminated via an an unchecked exception, since strong exception safety guarantees the object will be unreachable anyway.}.
\end{itemize}

\noindent In Appendix \ref{s:proof}, we show that these checks, together with our aforementioned restrictions, are sufficient to ensure our guarantee that all objects involved in execution (except as part of an invariant check) are valid.

\subheading{Relaxations}
The above restrictions can be partially relaxed without breaking soundness, however this would not make the proof more interesting. In particular:
\begin{itemize}
	\item \Q!invariant! methods can be allowed to call instance methods that in turn only use \Q@this@ to read \Q!imm! or \Q!capsule!, or call other such instance methods. With this relaxation, the semantics of \Q@invariant@ needs to be understood with the body of those methods inlined; thus the semantics of the inlined code needs to be logically reinterpreted in the context of \Q@invariant@, where \Q@this@ may be invalid. In some sense, those inlined methods and field accesses can be thought of as macro expanded, and hence are not dynamically dispatched. Such inlining has been implemented in L42.
	\item We could allow all fields to be public, however \Q!capsule! fields, mentioned in the invariant of their containing class, should not be accessible over a \Q!mut! receiver other than \Q!this!. Even without this relaxation, however, getters and setters could be used to simulate public fields.
	\item Unrestricted readonly access to \Q!capsule! fields can be allowed by automatically generated getters of the form \Q!read method read C f() { return this.f; }!. Such getters are already a fundamental part of the L42 language.
	
	\item Java style constructors could be allowed, provided that \Q!this! is only used as the receiver of field initialisations. L42 does not provide such constructors, but one can always write a static factory method that behaves equivalently.
\end{itemize}
Both L42, and our formal language (see Section~\ref{s:formalism}) do not have traditional subclassing, rather all `classes' are either interfaces (which only have abstract methods), or are final (which cannot be subtyped). In a language with traditional subclassing, invariant methods would implicitly start with a check that \Q@super.invariant()@ returns \Q@true@. Note that invariant checks would not be performed at the end of \Q@super(..)@ constructor calls, but only at the end of \Q@new@ expressions, as happens in~\cite{feldman2006jose}.

\section{Formal Language Model}
\label{s:formalism}

In order to model our system, we need to formalise an imperative object oriented language
with exceptions, object capabilities, and rich type system
support for TMs and strong exception safety.
Formal models of the runtime semantics of such languages are simple, but 
defining and proving, such a type system would require a paper
of its own, and indeed many such papers exist in literature%
~\cite{ServettoEtAl13a,ServettoZucca15,GordonEtAl12,clebsch2015deny,JOT:issue_2011_01/article1}.
Thus we are going to assume that we already have an expressive and sound type system enforcing the properties we need, and instead focus on invariant checking.
We clearly list in Appendix \ref{s:proof} the assumptions we make on such a type system, so that any language satisfying them, such as L42, can soundly support our invariant protocol.

To keep our small step semantics as conventional as possible, we follow Pierce~\cite{pierce2002types} and Featherweight Java~\cite{IgarashiEtAl01}, and assume:
\begin{itemize}
	\item An implicit program/class table.
	\item Memory, $\sigma : l\rightarrow C\{\Many{v}\}${, is} a finite map from locations, $l$, to annotated tuples, $C\{\Many{v}\}$, representing objects; where $C$ is the class name and $\Many{v}$ are the field values.
	We use the notation $\sigma[l.f=v]$ to update a field of an object, and $\sigma[l.f]$ to access one.
	\item A main expression that is reduced in the context of such a memory and program.
	\item A typing relation, $\Sigma;\Gamma\vdash\e:T$, where 
	the expression $\e$ can contain locations and free variables. The types of locations are encoded in 
a memory environment, 
$\Sigma : l\rightarrow C$,
	while the types of free variables are encoded in
a variable environment, $\Gamma : x\rightarrow T$.
	\item We use $\Sigma^\sigma$ to trivially extract the corresponding $\Sigma$ from a $\sigma$.
\end{itemize}
To encode object capabilities and I/O, we assume a special location  $c$ of class \Q@Cap@. This location would refer to an object whose fields model things like the content of files. In order to simplify our proof, we assume that:
\begin{itemize}
	\item instances of \Q@Cap@ cannot be created with a \Q@new@ expression,
	\item all methods in the \Q@Cap@ class must require a \Q@mut@ receiver, and will mutate its ROG,
	\item \Q@Cap@ can only have \Q@mut@ fields, and
	\item \Q@Cap@'s \Q@invariant@ method is defined to return \Q@true@.
\end{itemize}
For simplicity, we do not formalise actual exception objects, rather we have \emph{error}s, which correspond to expressions which are currently  `throwing' an exception; 
in this way there is no value associated with the \emph{error}.
Our L42 implementation instead models exceptions as throwing an \Q@imm@ value, formalising exceptions in this way would not cause any interesting variation of our proof.

\newcommand{\ctxG}{\myCalBig{G}}
\renewcommand{\vs}{\Many{v}}
\renewcommand{\Opt}[1]{#1?}
\begin{figure}
	\!\!\!\!
	\begin{grammatica}
		\produzione{\e}{\x\mid l\mid\Kw{true}\mid\Kw{false}\mid \e\singleDot\m\oR\es\cR\mid \e\singleDot\f 
			\mid\e\singleDot\f\equals\e 
			\mid\Kw{new}\ C\oR\es\cR
			\mid\Kw{try}\ \oC\e_1\cC\ \Kw{catch}\ \oC\e_2\cC
		}{expression}\\
		\seguitoProduzione{
			\mid \M{l}{\e_1}{\e_2}\mid\Kw{try}^{\sigma}\oC\e_1\cC\ \Kw{catch}\ \oC\e_2\cC
		}{runtime expr.}\\
		\produzione{v}{l}{value}\\
		\produzione{\ctx_v}{\square
			\mid \ctx_v\singleDot m\oR\es\cR
			\mid v\singleDot\m\oR\Many{v}_1,\ctx_v,\es_2\cR
			\mid v\singleDot\f\equals\ctx_v
		}{evaluation context}\\
		\seguitoProduzione{
			\mid \Kw{new}\ C\oR\Many{v}_1,\ctx_v,\es_2\cR
			\mid \M{l}{\ctx_v}{\e}
			\mid \M{l}{v}{\ctx_v}
			\mid \Kw{try}^\sigma\oC\ctx_v\cC\ \Kw{catch}\ \oC\e\cC}{}\\
		
		\produzione{\ctx}{\square\mid\ctx\singleDot m\oR\es\cR\mid\e\singleDot\m\oR\es_1,\ctx,\es_2\cR
			\mid \e\singleDot\f\equals\ctx
			\mid \Kw{new}\ C\oR\es_1,\ctx,\es_2\cR
		}{full context}\\
		\seguitoProduzione{
			\mid
			\M{l}{\ctx}{\e}\mid
			\M{l}{\e}{\ctx}\mid
			\Kw{try}^{\sigma?}\oC\ctx\cC\ \Kw{catch}\ \oC\e\cC\mid
			\Kw{try}^{\sigma?}\oC\e\cC\ \Kw{catch}\ \oC\ctx\cC
			
		}{}\\

		\produzione{\mathit{CD}}{\Kw{class}\ C\ \Kw{implements}\ \Many{C}\oC\Many{F}\,\Many{M}\cC\mid 
			\Kw{interface}\ C\ \Kw{implements}\ \Many{C}\oC\Many{M}\cC
		}{class declaration}\\
		\produzione{F}{\T\ \f\semiColon}{field}\\
		\produzione{M}{\mdf\, \Kw{method}\, \T\ \m\oR\T_1\,\x_1,\ldots,\T_n\,\x_n\cR\ \Opt\e}{method}\\
		\produzione{\mdf}{\Kw{mut}\mid\Kw{imm}\mid\Kw{capsule}\mid\Kw{read}}{type modifier}\\
		\produzione{\T}{\mdf\,C}{type}\\
		\produzione{r_l}{
			v\singleDot\m\oR\Many{v}\cR
			\mid v\singleDot\f
			\mid v_1\singleDot\f\equals v_2
			\mid \Kw{new}\,C\oR\Many{v}\cR
			,\quad\text{where }l\in \{v,v_1,v_2,\Many{v}\}
		}{redex containing $l$}\\
		\produzione{\mathit{error}}{
			\ctx_v[\M{l}{v}{\Kw{false}}]
			,\quad\text{where }
			\ctx_v \text{ not of form}\ \ctx_v'[\Kw{try}^{\sigma?}\oC\ctx_v''\cC\ \Kw{catch}\ \oC\_\cC]
		}{validation error}
	\end{grammatica}
	\caption{Grammar}\label{f:grammar}
\end{figure}

\subheading{Grammar}
The detailed grammar is defined in Figure \ref{f:grammar}. 
Most of our expressions are standard.
\emph{Monitor expressions}
 are of the form \M{l}{\e_1}{\e_2}, they 
are run time expressions and thus are not present in method bodies, rather they are generated by our reduction rules inside the main expression. Here, $l$ refers to the object being monitored, $e_1$ is the expression which is being monitored, and $e_2$ denotes the evaluation of $l.\invariant$. If, at any point in execution, $\e_2$ is \Q!false!, then $l$'s invariant failed to hold; such a monitor expression corresponds to the throwing of an unchecked exception.

In addition, our reduction rules will annotate \Q@try@ expressions with
the original state of memory. This is used to model the guarantee of strong exception safety, that is, the annotated memory will not be mutated by executing the body of the \Q@try@.

\subheading{Well Formedness Criteria}
We additionally restrict the grammar with the following well formedness criteria:
\begin{itemize}
	\item \Q@invariant@ methods and capsule mutators satisfy the restrictions in Section \ref{s:protocol}.
	\item Field accesses and updates in methods are of the form $\Kw{this}.f$ or $\Kw{this}.f\equals\e$, respectively.
	\item Field accesses and updates in the main expression are of the form $l.f$ or $l.f\equals\e$, respectively.
	\item Locations that are preserved by \Q@try@ blocks are
	never monitored, that is, for $\Kw{try}^\sigma\oC\e\cC\ \_$, if $\e$ is of the form $\ctx[\M{l}{\_}{\_}]$, then $l\notin\sigma$.
\end{itemize}
\newcommand{\rowSpace}{\\\vspace{2.5ex}}
\begin{figure}
	\!\!
	$\!\!\!\!\!\begin{array}{l}
	\inferrule[(update)]{{}_{}}{
		\sigma|l.f\equals{}v\rightarrow \sigma[l.f=v]|
		\M{l}{l}{l\singleDot\invariant}
	}{}
	\quad
	\inferrule[(new)]{{}_{}}{
		\sigma|\Kw{new}\ C\oR\vs\cR\rightarrow \sigma,l\mapsto C\{\vs\}|
		\M{l}{l}{l\singleDot\invariant}
	}{}
	\\
	\rowSpace
	\inferrule[(mcall)]{{}_{}}{
		\sigma|l\singleDot\m\oR v_1,\ldots,v_n\cR\rightarrow \sigma|
		\e'[\Kw{this}=l,\x_1=v_1,\ldots,x_n=v_n]
	}{
		\begin{array}{l}
		\sigma(l)=C\{\_\}\\
		C.m=\mdf\,\Kw{method}\,\T\,\m\oR\T_1\,\x_1\ldots\T_n\x_n\cR\,\e\\
		
		\text{if }\ \exists \f\text{ such that } C.f=\Kw{capsule}\,\_,
		\mdf=\Kw{mut},
		\\*\quad\f\, \text{inside}\, C\singleDot\m
		\text{, and }
		\f\,\text{inside}\, C\singleDot\Kw{invariant}
		
		\\*
		\text{then }\e'=\M{l}{e}{l\singleDot\invariant}\\*
		\text{otherwise }\e'= \e
	\end{array}
}
\rowSpace
\inferrule[(monitor exit)]{{}_{}}{
	\sigma|\M{l}{v}{\Kw{true}}\rightarrow \sigma|v
}{}
\quad

\inferrule[(ctxv)]{\sigma_0|\e_0\rightarrow\sigma_1|\e_1}{
	\sigma_0|\ctx_v[\e_0]\rightarrow \sigma_1|\ctx_v[\e_1]
}{}

\quad
\inferrule[(try enter)]{{}_{}}{
	\sigma|\Kw{try}\ \oC \e_1\cC\ \Kw{catch}\ \oC\e_2\cC\rightarrow 
	\sigma|\Kw{try}^\sigma\oC\e_1\cC\ \Kw{catch}\ \oC\e_2\cC
}{}
\quad

\rowSpace

\inferrule[(try ok)]{{}_{}}{
	\sigma,\sigma'|\Kw{try}^{\sigma}\oC v\cC\ \Kw{catch}\ \oC\_\cC\rightarrow \sigma,\sigma'|v
}{}
\quad

\inferrule[(try error)]{{}_{}}{
	\sigma,\_|\Kw{try}^\sigma\oC \mathit{error}\cC\ \Kw{catch}\ \oC\e\cC\rightarrow \sigma|\e
}
\quad
\inferrule[(access)]{{}_{}}{
	\sigma|l.f\rightarrow \sigma|\sigma[l.f]
}{}
\end{array}$
\caption{Reduction rules}\label{f:reductions}
\end{figure}

\subheading{Reduction rules}
Our reduction rules are defined in Figure \ref{f:reductions}.
They are pretty standard, except for our handling of monitor expressions.
We define the relation \emph{inside} as follows:\par
$
\f\, \textit{inside}\, C\singleDot\m\text{ iff }
C\singleDot\m=\_\,\Kw{method}\_\,\ctx[\Kw{this}\singleDot\f]
$


\noindent Monitor expressions are added after all field updates, \Q@new@ expressions, and calls to capsule mutators.
Monitor expressions are only a proof device, they need not be implemented directly as presented.
For example, in L42 we implement them by statically injecting calls to \Q!invariant! at the end of setters, factory methods and capsule mutators; this works as L42 does not have primitive expression forms for field updates and constructors, rather they are uniformly represented as method calls.

Our \textsc{ctxv} rule evaluates monitor expressions, \M{l}{\e_1}{\e_2}, by first evaluating $\e_1$ and then $\e_2$. If $\e_2$ evaluates to \Q@true@, then the monitor succeeded, and will yield the result of $\e_1$. If however $\e_2$ evaluated to \Q!false!, then the monitor failure will be caught by our \textsc{try error} rule, as will any other uncaught monitor failure in $e_1$ or $e_2$.

\subheading{Statement of Soundness}
We define a deterministic reduction to mean that exactly one reduction is possible:\\*
\indent$\ \sigma_0|e_0\Rightarrow \sigma_1|e_1$ iff $\{\sigma_1|\e_1\}=\{\sigma|\e \text{, where } \sigma_0|e_0\rightarrow \sigma|e\}$

\noindent An object is \emph{valid} iff calling its \Q@invariant@ method would
deterministically produce \Q@true@ in a finite number of steps, i.e. it does not evaluate to \Q@false@, fail to terminate, or produce an \emph{error}.
We also require evaluating \Q@invariant@ to preserve existing memory ($\sigma$), however new objects ($\sigma'$) can be created and freely mutated.

\indent$\mathit{valid}(\sigma,l)$ iff $\sigma | l.\invariant {\Rightarrow^+} \sigma,\sigma' | \Kw{true}$.

\noindent 
To allow the invariant method to be called on an invalid object, and access fields on such object, we define the set of trusted execution steps as the the call to \Q@invariant@ itself, and any field accesses inside its evaluation. Note that this only applies to single small step reductions, and not the entire evaluation of \Q!invariant!.

\noindent $\mathit{trusted}(\ctx_v,r_l)$ iff\\*
\indent either
$r_l=l.\invariant$ and
$\ctx_v=\ctx_v'[\M{l}{v}{\square}]$,\\*
\indent or
$r_l=l$\Q@.f@ and
$\ctx_v=\ctx_v'[\M{l}{v}{\ctx_v''}]$.

\noindent Finally, we define what it means to soundly enforce our invariant protocol: every object referenced by any untrusted redex is valid.

\begin{theorem}[Soundness]\rm
if $c:\Kw{Cap};\emptyset\vdash \e: \T$ and
$c\mapsto\Kw{Cap}\{\_\}|\e\rightarrow^+ \sigma|\ctx_v[r_l]$, then
either $\mathit{valid}(\sigma,l)$ or $\mathit{trusted}(\ctx_v,r_l)$.
\end{theorem}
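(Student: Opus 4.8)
The plan is to prove a strengthened, inductive statement and then specialise it. I would introduce a predicate $\mathcal{I}(\sigma\mid e)$ on configurations and show that it (i) holds for the initial configuration $c\mapsto\Kw{Cap}\{\_\}\mid\e$, (ii) is preserved by every single step $\rightarrow$, and (iii) entails the conclusion whenever $e$ decomposes as $\ctx_v[r_l]$. I take $\mathcal{I}(\sigma\mid e)$ to assert that $\sigma\mid e$ is well typed in the assumed sound type system of Section~\ref{s:formalism}, together with the \emph{monitoring condition}: for the unique active redex $r$ of $e$ and every location $l$ occurring in it (so $r=r_l$), either $\mathit{valid}(\sigma,l)$ or $\mathit{trusted}(\ctx_v,r_l)$; and, crucially, every location that is \emph{not} valid in $\sigma$ occurs only inside the monitored body $\e_1$ of some enclosing monitor $\M{l}{\e_1}{\e_2}$ (or is freshly allocated during an invariant evaluation), never in the active redex except trustedly. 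Clause (iii) is then immediate, and clause (i) holds because $c$ is the only location and $\Kw{Cap}$'s \Q@invariant@ returns \Q@true@, so $c$ is valid.

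The work is clause (ii), by case analysis on the rules of Figure~\ref{f:reductions}. Subject reduction keeps the configuration well typed, which is exactly what licenses each appeal to the type-modifier guarantees; I would quote it from the assumed type system rather than reprove it. The structural and control rules \textsc{(access)}, \textsc{(ctxv)}, \textsc{(monitor exit)}, \textsc{(try enter)} and \textsc{(try ok)} mutate no pre-existing memory and expose no guarded location, so validity of every object is preserved and the active redex only shifts in ways that stay valid or trusted; in particular \textsc{(monitor exit)} fires only once $\e_2$ has reduced to \Q@true@, i.e. precisely when the monitored $l$ has just been revalidated. For \textsc{(try error)} I would invoke strong exception safety: the recorded memory $\sigma$ is untouched by the aborted body, and the well-formedness clause forbidding monitored locations inside preserved \Q@try@ blocks ensures the discarded error leaves no reachable broken object.

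The delicate rules are \textsc{(update)} and the capsule-mutator case of \textsc{(mcall)}, since these are the only steps that can falsify an invariant. For \textsc{(update)}, $\sigma\mid l\singleDot f\equals v\rightarrow\sigma[l.f=v]\mid\M{l}{l}{l\singleDot\invariant}$ wraps $l$ in a monitor whose body is already the value $l$, so the new active redex is $l\singleDot\invariant$ under $\ctx_v'[\M{l}{l}{\square}]$, which is \emph{trusted} by definition; the possibly broken $l$ is thereby permitted. For \textsc{(mcall)} I would exploit the syntactic restrictions on capsule mutators: the receiver $=l$ occurs exactly once in the body, as the read of the \Q@capsule@ field, the method has no \Q@mut@ or \Q@read@ parameters, and its result is not \Q@mut@. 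Hence after that single field read reduces, $l$ no longer occurs in the monitored body; by the deep \Q@capsule@ semantics the only \Q@mut@ handle into that field's reachable object graph is the one just obtained, so all further mutation is confined to that graph and cannot reach $l$, and $l$ reappears only in the trailing $l\singleDot\invariant$ once the body is a value, which is again trusted.

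The main obstacle is the separation argument that a single write can invalidate \emph{only} the directly mutated object and the objects encapsulating it. I would isolate this as a lemma: because \Q@invariant@ methods read \Q@this@ solely through \Q@imm@ and \Q@capsule@ fields, the validity of any object $l'$ depends only on the reachable object graphs of its \Q@imm@/\Q@capsule@ fields; \Q@imm@ graphs never mutate, so a write can disturb $l'$ only when the written object lies in one of $l'$'s \Q@capsule@ graphs, and the \Q@capsule@ guarantee then forces that write to occur inside an active capsule-mutator or field-update monitor for $l'$. Consequently every object made invalid by the step is either the freshly monitored target of \textsc{(update)}/\textsc{(new)} or is already sealed behind an enclosing monitor and absent from the active redex, so the monitoring condition is restored. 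Assembling the cases gives preservation, and induction over the $\rightarrow^+$ sequence yields the theorem.
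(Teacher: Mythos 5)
Your proposal follows the same overall architecture as the paper's proof: strengthen the theorem to an inductive invariant on configurations, prove preservation by cases on the reduction rules, and discharge the delicate cases using determinism, strong exception safety, and the capsule guarantees. The gap is in what your invariant $\mathcal{I}$ actually records. You constrain only \emph{invalid} locations (confined to monitors, never untrustedly in the active redex) plus well-typedness; but the preservation argument you sketch for \textsc{(update)} and \textsc{(access)} needs a fact about \emph{valid, unmonitored} objects, namely that no \Q@mut@ alias into their encapsulated ROG occurs anywhere in the current expression, and that every \Q@mut@-typed capsule field access in the expression sits under a monitor for its receiver. Your separation lemma simply asserts this conclusion (``the capsule guarantee then forces that write to occur inside an active capsule-mutator or field-update monitor for $l'$''), but it is not a per-step consequence of the type system: the type system knows nothing about monitor expressions, which exist only in this paper's operational semantics, and well-typedness of a runtime configuration does not forbid an expression from containing both $l'$ and a \Q@mut@ occurrence of a location inside $l'$'s capsule ROG --- indeed exactly this situation arises legitimately during every capsule-mutator execution. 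The connection between capsule-field mutation and monitors is a global property of \emph{reachable} configurations and must itself be part of what is proved preserved. This is precisely why the paper's induction carries two components your $\mathcal{I}$ lacks: $\mathit{wellEncapsulated}$ inside $\mathit{OK}$ (no location of a valid unmonitored object's $\mathit{erog}$ is $\mathit{mutatable}$ in the expression), and the separate invariant $\mathit{fieldGuarded}$, preserved by its own lemma (\thm{Exposer Instrumentation}), stating that every \Q@mut@ capsule field access is individually guarded by a monitor in which the receiver occurs exactly once. Without these in the induction hypothesis, your \textsc{(update)} case cannot conclude that an object whose capsule ROG contains the written location was already monitored, and your \textsc{(mcall)}/\textsc{(monitor exit)} cases have nowhere to record that no \Q@mut@ handle leaks when a monitor is discarded, which later steps depend on.

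A secondary slip: you state that invalid locations occur ``only inside the monitored body $\e_1$''. The paper's $\mathit{monitored}(\e,l)$ requires the opposite orientation: $l$ must \emph{not} occur properly inside $\e_1$ (or $\e_1=l$ exactly), because $\e_1$ is the code currently running and must be unable to observe the possibly broken object; $l$ may legitimately appear only as the monitor's label and in the check $\e_2$. As literally written, your condition would permit the running code to use a broken object, which is exactly what the protocol must exclude.
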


\section{Invariants Over Immutable State}
\label{s:immutable}
In this section we consider validation over fields of \Q@imm@ types.
In the next section we detail our technique for \Q@capsule@ fields.

In the following code \Q@Person@ has a single immutable (non final) field \Q@name@:
\begin{lstlisting}
class Person {
  read method Bool invariant() { return !name.isEmpty(); }
  private String name;//the default modifier imm is applied here
  read method String name() { return this.name; }
  mut method String name(String name) { this.name = name; }
  Person(String name) { this.name = name; }
}
\end{lstlisting}
\Q@Person@ only has immutable fields and its constructor only uses \Q@this@ to initialise them.
Note that the \Q@name@ field is not final, thus \Q@Person@ objects can change state during their lifetime. This means that the ROGs of all \Q@Person@s fields are immutable, but \Q@Person@s themselves may be mutable.
We can easily enforce \Q@Person@'s invariant by generating checks on the result of \Q@this.invariant()@: immediately after each field update, and at the end of the constructor.%


\begin{lstlisting}
class Person { .. // Same as before
  mut method String name(String name) {
    this.name = name; // check after field update
    if (!this.invariant()) { throw new Error(...); }}
  Person(String name) {
    this.name = name; // check at end of constructor
    if (!this.invariant()) { throw new Error(...); }}
}
\end{lstlisting}
%

Such checks will be generated/injected, and not directly written by the programmer. If we were to relax (as in Rust), or even eliminate (as in Java), the support for TMs or OCs, the enforcement of our invariant protocol for the \Q@Person@ class would become harder, or even impossible. 

\subheading{Unrestricted use of non determinism} Allowing the \Q@invariant@ method to (indirectly) perform a non deterministic operation, such as by creating new capability objects, could break our guarantee that (manually) calling it always returns \Q!true!.
%
For example consider this simple and contrived (mis)use of person:
\begin{lstlisting}[morekeywords={assert}]
class EvilString extends String {
  @Override read method Bool isEmpty() {
    // Create a new capability object out of thin air
    return new Random().bool(); }
} ..
method mut Person createPersons(String name) {
  // we can not be sure that name is not an EvilString
  mut Person schrodinger = new Person(name); // exception here?
  assert schrodinger.invariant(); // will this fail?
  ..}
\end{lstlisting}

Despite the code for \Q@Person.invariant@ intuitively looking correct and deterministic, the above call to it is not. Obviously this breaks any reasoning and would make our protocol unsound. 
In particular, note how in the presence of dynamic class loading, we have no way of knowing what the type of \Q@name@ could be. Since our system allows non determinism only through capability objects, and 
restricts their creation, the above example would be prevented.

\subheading{Allowing Internal Mutation Through Back Doors}
Suppose we relax our rules by allowing interior mutability
as in Rust and Javari, where sneaky mutation
of the ROG of an `immutable' object is allowed.
Those back doors are usually motivated by performance reasons, however in~\cite{GordonEtAl12} they
briefly discuss how a few trusted language primitives can be used to perform caching and other needed optimisations,
without the need for back doors.

Our example shows that such back doors can be used to break determinism of \Q@invariant@ methods, by allowing the invariant to store and read information about previous calls. In the following example we use \Q@MagicCounter@ as a back door to remotely break the invariant of \Q@person@ without any interaction with the \Q@person@ object itself:
\begin{lstlisting}
class MagicCounter {
  method Int increment(){
    //Magic mutation through an imm receiver, equivalent to i++
}}
class NastyS extends String {..
  MagicCounter evil = new MagicCounter(0);
  @Override read method Bool isEmpty() {
    return this.evil.increment() != 2; }
} ..
NastyS name = new NastyS("bob"); //TMs believe name's ROG is imm
Person person = new Person(name); // person is valid, counter=1
name.increment(); // counter == 2, person is now broken
person.invariant(); // returns false!, counter == 3
person.invariant(); // returns true, counter == 4
\end{lstlisting}

\subheading{Strong Exception Safety}
The ability to catch and recover from invariant failures is extremely useful as it allows programs to take corrective action.
Since we represent invariant failures by throwing unchecked exceptions, programs can recover from them with a conventional \Q@try@--\Q@catch@.
	Due to the guarantees of strong exception safety, any object that has been mutated during a \Q@try@ block is now unreachable (as happens in alias burying~\cite{boyland2001alias}). In addition, since unchecked exceptions are immutable, they can not contain a \Q@read@ reference to any object (such as the \Q@this@ reference seen by \Q@invariant@ methods). These two properties ensure that an object whose invariant fails will be unreachable after the invariant failure has been captured. 
If instead we were to not enforce strong exception safety, an invalid object could be made reachable:
\begin{lstlisting}[morekeywords={assert}, escapechar=\%]
mut Person bob = new Person("bob");
// Catch and ignore invariant failure:
try { bob.name(""); } catch (Error t) { } // ill typed in L42
assert bob.invariant(); // bob is invalid!
\end{lstlisting}

As you can see, recovering from an invariant failure in this way is unsound and would break our protocol.

\section{Invariants over encapsulated state}
\label{s:encapsulated}
Consider managing the shipment of items, where there is a maximum combined weight:
\begin{lstlisting}
class ShippingList {
  capsule Items items;
  read method Bool invariant() {
    return this.items.weight() <= 300; }
  ShippingList(capsule Items items) {
    this.items = items;
    if (!this.invariant()) {throw Error(...);}}// injected check
  mut method Void addItem(Item item) {
    this.items.add(item);
    if (!this.invariant()) {throw Error(...);}}// injected check
}
\end{lstlisting}

To handle this class we just inject calls to \Q@invariant@ at the end of the constructor and the \Q@addItem@ method.
This is safe since the \Q@items@ field is declared \Q@capsule@.
Relaxing our system to allow a \Q@mut@ modifier for
the \Q@items@ field and the corresponding constructor parameter 
breaks the code:
the cargo we received in the constructor may already be compromised:
\begin{lstlisting}
mut Items items = ...;
mut ShippingList l = new ShippingList(items); // l is valid
items.addItem(new HeavyItem()); // l is now invalid!
\end{lstlisting}

As you can see it would be possible for external code with no knowledge of the \Q@ShippingList@ to mutate its items.%
\footnote{%
Conventional ownership solves these problems by requiring a deep clone of all the data the constructor takes as input, as well as all exposed data (possibly through getters).
In order to write correct library code in mainstream languages like Java and C++, defensive cloning~\cite{Bloch08} is needed.
For performance reasons, this is hardly done in practice and is a continuous source of bugs and unexpected behaviour~\cite{Bloch08}.}

Our restrictions on capsule mutators ensure that capsule fields are essentially an exclusive mutable reference.
Removing these restrictions would break our invariant protocol.
If we were to allow \Q@x.items@ to be seen as \Q@mut@, where \Q@x@ is not \Q@this@, then  even if the \Q@ShippingList@ has full control at initialisation time, such control may be lost later, and code unaware of the \Q@ShippingList@ could break it:
\begin{lstlisting}
mut ShippingList l = new ShippingList(new Items()); // l is ok
mut Items evilAlias = l.items // here l loses control
evilAlias.addItem(new HeavyItem()); // now l is invalid!
\end{lstlisting}

If we allowed a \Q@mut@ return type the following would be accepted:
\begin{lstlisting}
mut method mut Items expose(C c) {return c.foo(this.items);}
\end{lstlisting}

Depending on dynamic dispatch, \Q@c.foo()@ may just be the identity function, thus
we would get in the same situation as the former example.


Allowing \Q@this@ to be used more than once can also cause problems:
\begin{lstlisting}
mut method imm Void multiThis(C c) {
  read Foo f = c.foo(this);
  this.items.add(new HeavyItem());
  f.hi(); } // Can `this' be observed here?
\end{lstlisting}

If the former code were accepted, 
\Q@this@ may be reachable from \Q@f@, thus \Q@f.hi()@ may observe an invalid object.

In order to ensure that a second reference to \Q@this@ is not reachable through the parameters, we only accept \Q@imm@ and \Q@capsule@ parameters.
If we were however to accept a \Q@read@ parameter, as in the example below,
we would be in the same situation as before, where \Q@f@ may contain
a reference to \Q@this@:
\begin{lstlisting}
mut method imm Void addHeavy(read Foo f) {
  this.items.add(new HeavyItem())
  f.hi() } // Can 'this' be observed here?
...
mut ShippingList l = new ShippingList();
read Foo f = new Foo(l);
l.addHeavy(f); // We pass another reference to `l' through f
\end{lstlisting}

\section{GUI Case study}
\label{s:case-study}

%
%
%
%
%
%
%
Here we show that we are able to verify classes with circular mutable object graphs, that interact with the real world using I/O.
Our case study involves a GUI with containers (\Q@SafeMovable@s) and \Q@Button@s;
the \Q@SafeMovable@ class has an invariant to ensure that its children are completely contained within it and do not overlap. The \Q@Button@s move their \Q@SafeMovable@ when pressed. We have a \Q@Widget@ interface which provides methods to get \Q!Widget!s' size and position as well as children (a list of \Q@Widget@s). Both \Q@SafeMovable@s and \Q@Button@s implement \Q@Widget@. Crucially, since the children of \Q@SafeMovable@ is a list of \Q@Widget@s it can contain other \Q@SafeMovable@s, and all queries to their size and position are dynamically dispatched, such queries are also used in \Q@SafeMovable@'s invariant.
Here we show a simplified version\footnote{The full version, written in L42, which uses a different syntax, is available in our artifact at\\ \url{http://l42.is/EcoopArtifact.zip}}, where  \Q@SafeMovable@ has just one \Q@Button@, and certain sizes and positions are fixed. Note that \Q@Widgets@ is a class representing a mutable list of \Q@mut@ \Q@Widget@s.
\begin{lstlisting}[mathescape=false]
class SafeMovable implements Widget { capsule Box box;
  @Override read method Int left() { return this.box.l; }
  @Override read method Int top() { return this.box.t; }
  @Override read method Int width() { return 300; }
  @Override read method Int height() { return 300; }
  @Override read method read Widgets children() {
    return this.box.c; }
  @Override  mut method Void dispatch(Event e) {
    for (Widget w:this.box.c) { w.dispatch(e); }}
  read method Bool invariant() {..}
  SafeMovable(capsule Widgets cs) { this.box = makeBox(c); }
  static method capsule Box makeBox(capsule Widgets c) {
    mut Box b = new Box(5, 5, cs);
    b.c.add(new Button(0, 0, 10, 10, new MoveAction(b));
    return b; } //mut b is soundly promoted to capsule
}
class Box { Int l; Int t; mut Widgets c;
  Box(Int l, Int t, mut Widgets c) {..}
}
class MoveAction implements Action { mut Box outer;
  MoveAction(mut Box outer) { this.outer = outer; }
  mut method Void process(Event event) { this.outer.l += 1; }
}
..
//main expression; #$ is a capability method making a Gui object
Gui.#$().display(new SafeMovable(..));
\end{lstlisting}

As you can see, \Q@Box@es encapsulate the state of the \Q@SafeMovable@s that can change over time:
\Q@left@, \Q@top@, and \Q@children@. Also note how the ROG of \Q@Box@ is circular: since
the \Q@MoveAction@s inside \Q@Button@s need a reference to the containing \Q@Box@ in order to move it.
Even though the children of \Q@SafeMovable@s are fully encapsulated, we can still easily dispatch events to them using \Q@dispatch@. Once a \Q@Button@ receives an \Q@Event@ with a matching ID, it will call its \Q@Action@'s \Q@process@ method. 

Our example shows that the restrictions of TMs and OCs are flexible enough to encode interactive GUI programs, where widgets may circularly reference other widgets.
In order to perform this case study we had to first implement a simple GUI Library in L42. This library uses object capabilities to draw the widgets on screen, as well as fetch and dispatch the events. Importantly, neither our application, nor the underlying GUI library require back doors into either our type modifier or capability system to function, demonstrating the practical usability of our restrictions.

\subheading{The Invariant}
\Q@SafeMovable@ is the only class in our GUI that has an invariant, our system automatically checks it in two places: the end of its constructor and the end of its \Q@dispatch@ method (is a capsule mutator). There are no other checks inserted since we never do a field update on a \Q@SafeMovable@. The code for the invariant is just a couple of simple nested loops:
\begin{lstlisting}
read method Bool invariant() {
  for(Widget w1 : this.box.c) {
    if(!this.inside(w1)) { return false; }
    for(Widget w2 : this.box.c) {
      if(w1!=w2 && SafeMovable.overlap(w1, w2)){return false;}}}
  return true;}
\end{lstlisting}

Here \Q@SafeMovable.overlap@ is a static method that simply checks that the bounds of the widgets don't overlap. The call to \Q@this.inside(w1)@ similarly checks that the widget is not outside the bounds of \Q@this@; this instance method call is allowed as \Q@inside@ only uses \Q@this@ to access its fields.

\newlength{\imglength}%
\setlength{\imglength}{0.4\textwidth}%
\columnsep=0.5em%
\begin{wrapfigure}{l}{\imglength}%
	\setbox0=\hbox{\strut}%
	\vspace{-1.3\ht0}%
    \includegraphics[width=\imglength]{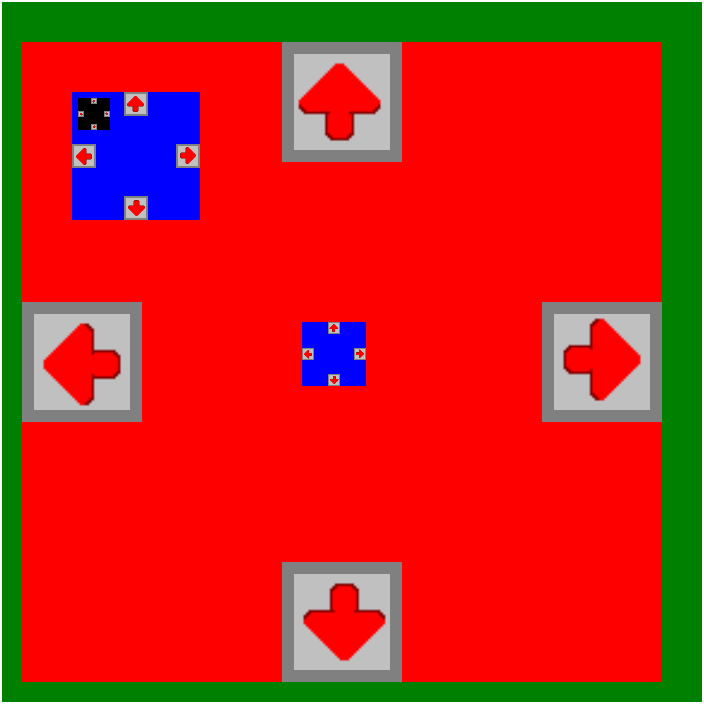}%
	\vspace{-1.5\ht0}%
\end{wrapfigure}
\subheading{Our Experiment}
As shown in the figure to the left, counting both \Q@SafeMovable@s and \Q@Button@s, our main method creates $21$ widgets: a top level (green) \Q@SafeMovable@ without buttons, containing $4$ (red, blue, and black) \Q@SafeMovable@s with
$4$ (gray) buttons each. When a button is pressed it moves the containing \Q@SafeMovable@ a small amount in the corresponding direction.
This set up is not overly complicated, the maximum nesting level of \Q@Widget@s is $5$.
Our main method automatically presses each of the $16$ buttons once. In L42, using the approach of this paper, this resulted in $77$ calls to \Q@SafeMovable@'s invariant.

\subheading{Comparison With Visible State Semantics}
As an experiment, we set our implementation to generate invariant checks following the  visible state semantics approaches of D and Eiffel~\cite{Alexandrescu:2010:DPL:1875434,DRef},
where the invariant of the receiver is instead checked at the start and end of \emph{every} 
public (in D) and qualified\footnote{That is, the receiver is not \Q!this!.} (in Eiffel) method calls.
In our \Q!SafeMovable! class, all methods are public, and all calls are qualified, thus this difference is irrelevant. Neither protocol performs invariant checks on field accesses or updates,
however due to the `uniform access principle'
, Eiffel allows fields to directly implement methods, allowing the \Q!width! and \Q!height! \emph{fields} to directly implement \Q!Widget!s \Q!width! and \Q!height! \emph{methods}. On the other hand in D, one would have to write getter \emph{methods}, which would invoke invariant checks.
When we ran our test case following the D approach, the \Q!invariant! method was called $52,734,053$ times, whereas the Eiffel approach `only' called it $14,816,207$ times;
in comparison our invariant protocol only performed $77$ calls. The number of checks is exponential in the depth of the GUI: the invariant of a \Q@SafeMovable@ will call the \Q@width@, \Q@height@, \Q@left@, and \Q@top@ methods of its children, which may themselves be \Q@SafeMovable@s, and hence such calls may invoke further invariant checks. Note that \Q!width! and \Q!height! are simply getters for fields, whereas the other two are non trivial \emph{methods}. 

\subheading{Spec\# Comparison}
We also encoded our example in Spec\#\footnote{We compiled Spec\# using the latest available source (from 19/9/2014). The verifier available online at \url{rise4fun.com/SpecSharp} behalves differently.}, which like L42, statically verifies aliasing/ownership properties, as well as the admissibility of invariants.
The backend of the L42 GUI library is written in Java, we did not port it to Spec\#, rather we just simulate the backend, and don't actually display a GUI in Spec\#.

We ran our code through the Spec\# verifier (powered by Boogie~\cite{DBLP:conf/fmco/BarnettCDJL05}), which only gave us $2$ warnings\footnote{We used \Q@assume@ statements, equivalent to Java's \Q@assert@, to dynamically check array bounds. 
This aligns the code with L42, which also performs such checks at runtime.}: that the invariant of \Q@SafeMovable@ was not known to hold at the end of its constructor and \Q@dispatch@ method. Like our system however, Spec\# checks the invariant
at those two points at runtime. Thus the code is equivalently verified in both Spec\# and L42; in particular it performed exactly the same number ($77$) of runtime invariant checks.\footnote{%
We also encoded our GUI in Microsoft Code Contracts~\cite{DBLP:conf/sac/FahndrichBL10}, whose unsound heuristic also calls the invariant $77$ times; however Code Contract does not enforce the
encapsulation of \Q@children@, thus their approach would not be sound in our context.}



We found it quite difficult to encode the GUI in Spec\#, due to its unintuitive and rigid ownership discipline. In particular we needed to use many more annotations, which were larger and had greater variety. In the following table we summarise the annotation burden,
for the \emph{program} that defines and displays the \Q@SafeMovable@s and our GUI; as well as the \emph{library} which defines \Q@Button@s, \Q@Widget@, and event handling.\footnote{We only count constructs Spec\# adds over C\# as annotations, we also do not count annotations related to array bounds or null checks.}:
\begin{center}\saveSpace\saveSpace
\begin{tabular}{ c  c  c  c  c}
 & Spec\# & Spec\# & L42 & L42 \\ 
 & \!\!program\!\! & library & \!\!program\!\! & library \\
\hline
 
\!\!\!Total number of annotations 
 	& $40$ & $19$ & $19$ & $18$ \\ \hline
\!\!\!Tokens (except \Q@.,;(){}[]@ and whitespace)\!\!\!
	& $106$ & $34$ & $18$ & $18$  \\  \hline
Characters (with minimal whitespace) 
	& $619$ & $207$ & $74$ & $60$ \\ \hline
\end{tabular}
\end{center}
\lstset{morekeywords={requires}}

To encode the GUI example in L42, the only annotations we needed were the 3 type modifiers: \Q@mut@, \Q@read@, and \Q@capsule@.
Our Spec\# code requires things such as, purity, immutability, ownership, method pre/post conditions and method modification annotations. In addition, it requires the use of 4 different ownership functions including explicit ownership assignments. In total we used 18 different kinds of annotations in Spec\#.
Together these annotations can get quite long, such as the following precondition on \Q@SafeMovable@'s constructor: \\
\indent\Q@requires Owner.Same(Owner.ElementProxy(children), children);@






\noindent The Spec\# code also required us to deviate from the style of code we showed in our simplified version: we could not write a usable \Q@children@ method in \Q@Widget@ that returns a list of children, instead we had to write \Q@children_count()@ and \Q@children(int i)@ methods; we also needed to create a trivial class with a \Q@[Pure]@ constructor (since \Q@Object@'s one is not marked as such). In contrast, the only strange thing we had to in L42 was creating \Q@Box@es by using 
an additional variable in a nested scope.
This is needed to delineate scopes for promotions.
Based on these results, we believe our system is significantly simpler and easier to use.

\subheading{The Box Pattern}
Our design, using an inner \Q@Box@ object, is a common pattern in static verification: where one encapsulates all relevant mutating state into an encapsulated sub object which is not exposed to users.

Both our L42 and Spec\# code required us to use the box pattern for our \Q@SafeMovable@, due to the circular object graph caused by the \Q@Action@s of \Q@Button@s needing to change their enclosing \Q@SafeMovable@'s position.

\subheading{The Transform Pattern}
Suppose we want to scale a \Q@Widget@, we could add \Q@mut@ setters for \Q@width@, \Q@height@, \Q@left@, and \Q@top@ in the \Q@Widget@ interface. However, if we also wish to scale its children we have a problem, since \Q@Widget.children@ returns a \Q@read Widgets@, which does not allow mutation. We could of course add a \Q@mut@ method \Q@zoom@ to the \Q@Widget@ interface, however this does not scale if more operations are desired. If instead \Q@Widget.children@ returned a \Q@mut Widgets@, it would be difficult for \Q@Widget@ implementations, such as \Q@SafeMovable@, to keep control of their ROGs.



A simple and practical solution would be to define a \Q@transform@ method in \Q@Widget@, and a \Q@Transformer@ interface 
like so:\footnote{A more general transformer could return a generic \Q@read R@.}
\begin{lstlisting}
interface Transformer<T> { method Void apply(mut T elem); }
interface Widget { ..
  mut method Void top(Int that); // setters for immutable data
  mut method read Void transform(Transformer<Widgets> t);
} // transformers for possibly encapsulated data
class SafeMovable { ..
  mut method Void transform(Transformer<Widgets> t) {
    return t.apply(this.box.c); }} // Well typed capsule mutator
\end{lstlisting}\saveSpace
The \Q@transform@ method offers an expressive power similar to \Q@mut@ getters, but prevents \Q@Widgets@ from leaking out.  With a \Q@Transformer@, a \Q@zoom@ function could be simply:
\begin{lstlisting}
static method Void zoom(mut Widget w) {
  w.transform(ws -> { for (wi : ws) { zoom(wi,scale); }});
  w.width(w.width() / 2); ..; w.top(w.top() / 2); }
\end{lstlisting}




\section{Related work}
\label{s:related}

\subheading{Type Modifiers}
We rely on a combination of modifiers that are supported by at least 3 languages/lines of research:
L42~\cite{ServettoZucca15,ServettoEtAl13a,JOT:issue_2011_01/article1,GianniniEtAl16},
Pony~\cite{clebsch2015deny,clebsch2017orca}, and Gordon et.~al.~\cite{GordonEtAl12}; 
each of these works is accompanied by proofs about the properties of those modifiers.
Since such proofs have already been done, in this work we just assume the required properties.
Those approaches all support deep/strong interpretation, without back doors.

TM approaches like Javari~\cite{TschantzErnst05,Boyland06} and Rust~\cite{matsakis2014rust} are unsuitable since they introduce back doors which are not easily verifiable as being used properly.
Many approaches just try to preserve purity (as for example~\cite{pearce2011jpure}), but here we also need aliasing control.
Ownership~\cite{ClarkeEtAl13,ZibinEtAl10,DietlEtAl07} is another popular form of aliasing control that can be used as a building block for static verification~\cite{%
muller2002modular,%
barnett2011specification%
}.
Capsule/isolated local variables are affine in that they can be used only once, however this linearity is a property of variables, not expressions or fields. Linear/affine types extend this idea further, however they usually do not consider the ROGs of such types, or work in an OO setting~\cite{ahmed20073,fahndrich2002adoption}.


\subheading{Object Capabilities}
In literature, OCs are used to provide a wide range of guarantees, and many variations are present.
Object capabilities~\cite{RobustComposition}, in conjunction with type modifiers, are able to
 enforce purity of code in a modular way, without requiring the use of monads.
L42 and Gordon use OCs simply to reason about I/O and non determinism. This approach is best exemplified by Joe-E~\cite{finifter2008verifiable}, which is a self contained and minimalistic language using OCs over a subset of Java in order to reason about determinism.
However, in order for Joe-E to be a subset of Java, they leverage on a simplified model of immutability:
immutable classes must be final with only final fields that refer to immutable classes.
In Joe-E, every method that only takes instances of immutable classes is pure.
Thus their model would not allow the verification of purity for invariant methods of mutable objects.
In contrast our model has a more fine grained representation of mutability: it is \emph{reference based} instead of \emph{class based}. In our work, every method taking only \Q@read@ or \Q@imm@ \emph{references} is pure, regardless of their class type.

\subheading{Class invariant protocols}
Class invariants are a fundamental part of the design by contract methodology. 
Invariant protocols differ wildly and can be unsound or complicated, particular due to re entrancy and aliasing~\cite{leino2004object,drossopoulou2008unified, meyer2016class}. 

While invariant protocols all seem to 
check and assume the invariant of an object after its construction, they handle invariants differently across object lifetimes; popular sound approaches include:
\begin{itemize}
\item The invariants of objects in a \textit{steady} state are known to hold: that is when execution is not inside any of the objects public methods~\cite{Gopinathan:2008:RMO:1483018.1483028}. Invariants need to be constantly maintained between calls to public methods~\cite{WikiInvariant}.
\item 
The invariant of the receiver before a public method call and at the end of every public method body needs to be ensured. The invariant of the receiver at the beginning of a public method body and after a public method call can be assumed~\cite{Burdy2005,drossopoulou2008unified}.  
Some approaches ensure the invariant of the receiver of the \emph{calling} method, rather than the \emph{called} method~\cite{DBLP:journals/scp/MullerPL06}.
JML~\cite{JML} relaxes these requirements for helper methods, whose semantic is the same as if they were inlined.

\item The same as above, but only for the bodies of `selectively exported' (i.e. non instance private) methods, and only for `qualified' (i.e. not \Q@this@) calls~\cite{meyer2016class}.
\item The invariant of an object is assumed only when a contract requires the object be `packed'. It is checked after an explicit `pack' operation, and objects can later be `unpacked'~\cite{DBLP:journals/jot/BarnettDFLS04}.
\item Or, as in this work, the invariant of any object which could be \emph{involved} in execution is assumed to hold. It is checked after every modification of the object or its encapsulated ROG.
\end{itemize}
\noindent These different protocols can be deceivingly similar, and 
some approaches like JML suggest verifying a simpler approach (that method calls preserve the invariant of the \emph{receiver}) but assume a stronger one (the invariant of \emph{every} object, except \Q@this@, holds).





\newcommand\sepItems{\saveSpace\saveSpace\saveSpace\\*${}_{}$\\*${}_{}\,\bullet\,$}

\subheading{Runtime Verification Tools}
Many languages and tools support some form of runtime invariant checking (e.g. Eiffel~\cite{Meyer:1992:EL:129093}, D~\cite{Alexandrescu:2010:DPL:1875434},
and JML~\cite{Burdy2005}).
By looking to a survey by Voigt et al.~\cite{Voigt2013} and the extensive MOP project~\cite{meredith2012overview},
it seems that most runtime verification tools (RV) empower users
to implement the kind of monitoring they see fit for their specific problem at hand. This means that users are responsible for deciding, designing, and encoding both the logical properties and the instrumentation criteria~\cite{meredith2012overview}.
In the context of class invariants, this means the user defines the invariant protocol and the soundness of such protocol is not checked by the tool.

In practice, this means that the logic, instrumentation, and implementation end up connected:
a specific instrumentation strategy is only good to test certain logic properties in certain applications.
No guarantee is given that the implemented instrumentation strategy is able to support the required logic in the monitored application.
Some of these tools are designed to support class invariants: for example InvTS~\cite{gorbovitski08efficient} lets you write Python conditions that are verified on a set of Python objects, but the programmer needs to be able
to predict which objects are in need of being checked and to use a simpler domain specific language to target them. Hence if a programmer makes a mistake while using this domain specific language, invariant checking
will not be triggered.
Some tools are intentionally unsound and just perform invariant checking following some heuristic that is expected to catch most failures: such as jmlrac~\cite{Burdy2005} and Microsoft Code Contracts~\cite{fahndrich2010embedded}.


%
%

Many works attempt to move out of the `RV tool' philosophy to ensure RV monitors work as expected, as for example
the study of contracts as refinements of types~\cite{findler2001contract}.
However, such work is only interested in pre and post conditions, not class invariants.

Our invariant protocol is much stronger then
visible state semantics, and keeps the invariant under tight control.
Gopinathan et.~al.'s.~\cite{Gopinathan:2008:RMO:1483018.1483028} approach keeps
a similar level of control:
relying on powerful aspect oriented support, they detect any field update in the whole ROG of any object, and check all the invariants that such update may have violated.
We agree with their criticism of visible state semantics, where  methods still have to assume that any object may be broken; in such case calling any public method would trigger an error, but while the object is just passed around (and for example stored in collections), the broken state will not be detected; Gopinathan et.~al. says ``\emph{there are many instances where $o$'s invariant is violated by the programmer inadvertently changing the state of $p$ when $o$ is in a steady state. Typically, $o$ and $p$ are objects exposed by the API, and the programmer (who is the user of the API), unaware of the dependency between $o$ and $p$, calls a method of $p$ in such a way that $o$'s invariant is violated. The fact that the violation occurred is detected much later, when a method of $o$ is called again, and it is difficult to determine exactly where such violations occur.}''

However, their approach addresses neither exceptions nor non determinism caused by I/O, so their work is unsound if those aspects are taken into consideration.

Their approach is very computationally intensive, but we think it is powerful enough that it could even be used to roll back the very field update that caused the invariant to fail, making the object valid again.
We considered a roll back approach for our work, however rolling back a single field update is likely to be completely unexpected, rather we should roll back more meaningful operations, similarly to what happens
with transactional memory, and so is likely to be very hard to support efficiently.
Using TMs to enforce strong exception safety is a much simpler alternative, providing the same level of safety, albeit being more restrictive (namely that if the operation did succeed it is still effectively rolled back).



Chaperones and impersonators~\cite{DBLP:conf/oopsla/StricklandTFF12} lifts the techniques of gradual typing~\cite{takikawa2015towards,DBLP:conf/oopsla/TakikawaSDTF12,DBLP:conf/popl/WrigstadNLOV10}
to work on general purpose predicates, where
values can be wrapped to ensure an invariant holds.
This technique is very powerful and can be used to enforce pre and post conditions by wrapping function arguments and return values.
This technique however does not monitor the effects of aliasing, as such they may notice if a contract has broken, but not when or why. In addition, due to the difficulty of performing static analysis in weakly typed languages, they need to inject runtime checking code around every user facing operation.
Aspect oriented systems like Jose~\cite{feldman2006jose}, similarly wrap invariant checks around method bodies.




\subheading{Security and Scalability}
Our approach allows verifying an object's invariant independently of the actual invariants of other objects.
This is in contrast with the main strategy of static verification: to verify a method, the system assumes the contracts of other methods, and the content of those contracts is the starting point for their proof.
Thus, static verification proceeds like a mathematical proof: a program is valid if it is all correct, but a single error invalidates all claims. This makes it hard to perform verification on large programs, or when independently maintained third party libraries are involved.
This is less problematic with a type system, since its properties are more coarse grained, simpler and easier to check.
 Static verification has more flexible and fine grained annotations and often relies on a fragile theorem prover as a backend.

To soundly verify code embedded in an untrusted environment, as in gradual typing~\cite{DBLP:conf/oopsla/TakikawaSDTF12,DBLP:conf/popl/WrigstadNLOV10}, it is possible to 
consider a verified core and a runtime verified boundary.
You can see our approach as an extremely modularized	version of such system: every class is its
own verified core, and the rest of the code could have Byzantine behaviour. Our formal proofs show that every class that compiles/type checks is 
soundly handled by our protocol, independently of the code that uses such class or any other surrounding code.

Our approach works both  in a library setting and with the open world assumption.
Consider for example the work of Parkinson~\cite{parkinson2007class}: in his short paper he verified a property of the \Q@Subject/Observer@ pattern. However, the proof relies on (any override of) the \Q@Subject.register(Observer)@ method respecting its contract. Such assumption is unrealistic in a real world system with dynamic class loading, and could trivially be broken by a user defined \Q@EvilSubject@.

\subheading{Static Verification}
Spec\#~\cite{Barnett:2004:SPS:2131546.2131549} is a language built on top of C\#, it adds various annotations such as method contracts and class invariants. 
It primarily follows the Boogie methodology~\cite{DBLP:journals/tcs/NaumannB06} where (implicit) annotations are used to specify and modify the owner of objects and whether their invariants are required to hold. Invariants can be \emph{ownership} based~\cite{DBLP:journals/jot/BarnettDFLS04}, where an invariant only depends on objects it owns; or \emph{visibility} based~\cite{DBLP:conf/mpc/BarnettN04,DBLP:conf/ecoop/LeinoM04}, where an invariant may depend on objects it doesn't own, provided that the class of such objects know about this dependence. Unlike our approach, Spec\# does not restrict the aliases that may exist for an object, rather it restricts object mutation: an object cannot be modified if the invariant of its owner is required to hold. This is more flexible than our approach as it also allows only part of an object's ROG to be owned/encapsulated. However as we showed in Section \ref{s:case-study}, it can become much more difficult to work with and requires significant annotation since merely having an alias to an object
is insufficient to modify it or call methods on it.
Spec\# also works with existing .NET libraries by annotating them with contracts, however such annotations are not verified. Spec\#, like us, does perform runtime checks for invariants and throws unchecked exceptions on failure.  However Spec\# does not allow soundly recovering from an invariant failure, since catching unchecked exceptions in Spec\# is intentionally unsound.~\cite{Leino2004ExceptionSF}

Another system is AutoProof~\cite{DBLP:conf/fm/PolikarpovaTFM14}, a static verifier for Eiffel that also follows the Boogie methodology, but extends it with \emph{semantic collaboration} where objects keep track of their invariants' dependencies using ghost state.
Dafny~\cite{DBLP:conf/sigada/Leino12} is a new language where all code is statically verified, it supports invariants by injecting pre and post conditions following visible state semantics;
however it requires objects to be newly allocated (or cloned) before another object's invariant may depend on it.
Dafny is also generally highly restrictive with its rules for mutation, and object construction, it also does not provide any means of performing non deterministic I/O.



\subheading{Specification languages}
Using a specification language based on the mathematical metalanguage and different from the program language's semantics may seem attractive, since it can express uncomputable concepts, has no mutation or non determinism, and is often easier to formally reason about.

However, a study~\cite{chalin2007logical} discovered that developers expect specification languages to follow the semantics of the underling language, including short circuit semantics and arithmetic exceptions; thus for example \Q@1/0@\,\Q@||@\,\Q@2>1@ should not hold, while \Q@2>1@\,\Q@||@\,\Q@1/0@ should, thanks to short circuiting.
This study was influential enough to convince JML to change its interpretation of logical expressions
accordingly~\cite{chalin2008jml}.
Dafny~\cite{DBLP:conf/sigada/Leino12} uses a hybrid approach: it has mostly the same language for both specification and execution. Specification (`ghost') contexts can use uncomputable constructs such as universal quantification over infinite sets. Whereas runtime contexts allow mutation, object allocation and print statements. The semantics of shared constructs (such as short circuiting logic operators) is the same in both contexts.

Most runtime verification systems, such as ours, use a metacircular approach: specifications are simply code in the underlying language. Since specifications are checked at runtime, they are unable to verify uncomputable contracts.
 Ensuring determinism in a non functional language is challenging. Spec\# recognizes the need for purity/determinism when method calls are allowed in contracts~\cite{barnett200499} `\emph{There are three main current approaches: a) forbid the use of functions in specifications, b) allow only provably pure functions, or c) allow programmers free use
	of functions. The first approach is not scalable, the second overly restrictive and
	the third unsound.}'.

They recognize that many tools unsoundly use option (c), such as AsmL~\cite{barnett2003runtime}.
Spec\# aims to follow (b) but only considers non determinism caused by memory mutation, and allows other non deterministic operations, such as I/O and random number generation. For example, the following method verifies:
\begin{lstlisting}
[Pure] bool uncertain() {return new Random().Next() %$$ 2 == 0;}
\end{lstlisting}

And so \Q@assert uncertain() == uncertain();@ also verifies, but randomly fails with an exception at runtime.
As you can see failing to handle non determinism jeopardises reasoning.

A simpler and more restrictive solution to these problems is to prevent `pure' functions from reading or writing to any non final fields, or calling any impure functions. This is the approach used by~\cite{Flanagan06hybridtypes}, one advantage of their approach is that invariants (which must be `pure') can read from a chain of final fields, even when they are contained in otherwise mutable objects. However their approach completely prevents invariants from mutating newly allocated objects, thus greatly restricting how computations can be performed.
\saveSpace

\section{Conclusions and Future Work}
\label{s:conclusion}
Our approach follows the principles of \emph{offensive programming}
~\cite{stephens2015beginning}, where 
no attempt to fix or recover an invalid object is performed and
 failures (unchecked exceptions)
		are raised close to their cause: at the end of constructors creating invalid objects and immediately after field updates and instance methods that invalidate their receivers.


Our work builds on a specific form of TMs and OCs, whose
popularity is growing, and we expect future languages to support some variation of these.
Crucially, any language already designed with such TMs and OCs
can also support our invariant protocol with minimal added complexity.

We demonstrated the applicability and simplicity of our approach with a GUI example.
Our invariant protocol performs several orders of magnitude less checks than visible state semantics, and requires much less annotation 
than Spec\#, (the system with the most comparable goals). In Section~\ref{s:formalism} we formalised our invariant protocol and in Appendix~\ref{s:proof} we prove it sound.
To stay parametric over the various existing type systems which provably enforce the properties we require for our proof (and much more), we do not formalise any specific type system.

One interesting avenue for future work would be to
use invariants to encode pre and post conditions,
as done by~\cite{Flanagan06hybridtypes}: where pre and post conditions are encoded as the invariants of the parameter and return types (respectively).
Without good syntax sugar, such an approach could be quite verbose, however it would ensure that a methods precondition holds during the entire execution of a method, and not just the beginning. In addition this could be more efficient than traditional runtime checking when the same argument is used in the invocations of methods with the same pre condition, as happens often in practice for recursive methods: where many parameters are simply parsed unmodified in recursive calls.






The language we presented here restricts the forms of \Q@invariant@ and capsule mutator methods;
such strong restrictions allow for sound and efficient injection of invariant checks. 
These restrictions do not get in the way of writing invariants over immutable data, but the box pattern is required for verifying complex mutable data structures. We believe this pattern, although verbose, is simple and understandable. While it may be possible for a more complex and fragile type system to reduce the need for the pattern whilst still ensuring our desired semantics, we prioritize simplicity and generality. 

In order to obtain safety, simplicity, and efficiency we traded some expressive power:
the \Q@invariant@ method can only refer to immutable and encapsulated state.
This means that while we can easily verify that a doubly linked list of immutable elements
is correctly linked up,
we can not do the same for a doubly linked lists of mutable elements. Our approach does not prevent correctly implementing such data structures, but the \Q@invariant@ method would be unable to access the list's nodes, since they would contain \Q@mut@ references to shared objects.
In order to verify such data structures we could add a special kind of field which cannot be (transitively) accessed by invariants; such fields could freely refer to any object. We are however unsure if such complexity would be justified.





For an implementation of our work to be sound, catching exceptions like stack overflows or out of memory
cannot be allowed in \Q@invariant@ methods, since they are not deterministically thrown.
Currently L42 never allows catching them, however we could also write a (native) capability method (which can't be used inside an invariant) that enables catching them. Another option worth exploring would be to make such exceptions deterministic, perhaps by giving invariants fixed stack and heap sizes.

Other directions that could be investigated to improve our work include the addition of syntax sugar to ease the burden of the box and the transform patterns; type modifier inference, and support for flexible ownership types.

\bibliography{main} 
\appendix
\clearpage
\section{Proof and Axioms}
\label{s:proof}
\lstset{morekeywords={fwd}}

\subheading{Axiomatic Type Properties}
As previously discussed, instead of providing a concrete set of type rules, we provide a set of properties
that the type system needs to respect.
To express these properties, we first need some auxiliary definitions.


The encapsulated ROG of $l_0$ is composed of all the objects
in the ROG of its immutable and capsule fields:\\*
\indent $l \in \mathit{erog}(\sigma,l_0)
\text{ iff } \exists f, \Sigma^\sigma(l_0).f \in \{\Kw{imm}\,\_,\Kw{capsule}\,\_\}
\text{ and } l \in \mathit{rog}(\sigma,\sigma(l_0).f)$

\noindent An object is \emph{mutatable} in a $\sigma$ and  $\e$ if there is an occurrence of 
$l$ in $e$, that when seen as \Q@imm@ makes the expression ill typed:\\*
\indent $\mathit{mutatable}(l,\sigma,\e)$ iff for some $T=\Kw{imm}\,\Sigma^\sigma(l)$ and $\ctx[l]=\e$,\\*
\indent \indent $\Sigma^\sigma;\x:T\vdash\ctx[\x]:T'$ does not hold for any $T'$.



Here we assume the usual \thm{Progress} and \thm{Subject Reduction Base}. Note that \thm{Subject Reduction Base} only ensures properties about type checking, not invariant checking.\saveSpace\saveSpace
\begin{Assumption}[Progress]\rm
	if $\Sigma^{\sigma_0};\emptyset\vdash e_0: T_0$,
	and $e_0$ is not form $l$ or $\mathit{error}$, then
	$\sigma_0|e_0\rightarrow \sigma_1|e_1$.
\end{Assumption}

\begin{Assumption}[Subject Reduction Base]\rm
	if $\Sigma^{\sigma_0};\emptyset\vdash e_0: T_0$,
	$\sigma_0|e_0\rightarrow \sigma_1|e_1$,
	then
	$\Sigma^{\sigma_1};\emptyset\vdash e_1: T_1$.
\end{Assumption}

If the result of a field access is \Q!mut!,
the receiver is also \Q!mut!; field updates are only allowed on \Q!mut! receivers:\saveSpace\saveSpace
\begin{Assumption}[Mut Field]\rm
	\ \\
	\indent(1)\ if $\Sigma;\Gamma\vdash\e\singleDot\f:\Kw{mut}\,\_$
	then $\Sigma;\Gamma\vdash\e:\Kw{mut}\,\_$
	and 
	\\*\indent(2)
	if $\Sigma;\Gamma\vdash\e_0\singleDot\f\equals\e_1:T$
	then $\Sigma;\Gamma\vdash\e_0:\Kw{mut}\,\_$.
\end{Assumption}

An object is not part of the ROG of its immutable or capsule fields\footnote{This is not strictly true in L42, as 42 allows circular objects with \Q!fwd imm! fields, however such objects cannot have an invariant.}:\saveSpace\saveSpace
\begin{Assumption}[Head Not Circular]\rm
	if
	$\Sigma^\sigma;\Gamma\vdash l:T$,
	then $l\notin\text{erog}(\sigma,l)$.
\end{Assumption}

In a well typed $\sigma$ and $e$, if mutatable $l_2$ is reachable  through the \emph{erog} of
$l_1$, and $l_1$ is reachable through the \emph{erog} of $l_0$,
then all the paths connecting $l_0$ and $l_2$ pass trough $l_1$; thus
if we were to remove $l_1$ from the object graph, $l_0$ would no longer reach $l_2$:
\saveSpace\saveSpace
\begin{Assumption}[Capsule Tree]\rm
	If   $\Sigma^\sigma;\Gamma\vdash \e:\T$,
	$l_2\in\text{erog}(\sigma,l_1)$,
	$l_1\in\text{erog}(\sigma,l_0)$,\\*
	and
	$\mathit{mutatable}(l_2,\sigma,\e)$
	then 
	$l_2\notin\text{erog}(\sigma\setminus l_1,l_0)$.
\end{Assumption}

\thm{Capsule Tree} and \thm{Head Not Circular} together 
imply that capsule fields section the object graph into a tree of nested `balloons',
where nodes are mutable encapsulated objects and
edges are given by reachability between those objects in the original memory: if
$l_2$ is in the encapsulated ROG of $l_1$, and
$l_2$ is mutatable and reachable through $l_1$, then $l_2$ must be reachable by a \Q@capsule@ field.
Thanks to \thm{Head Not Circular} and $l_1\in\text{erog}(\sigma,l_0)$ we can derive that
$l_0\notin\text{erog}(\sigma,l_1)$.

The execution of an expression
with no \Q@mut@ free variables is deterministic and does not
mutate pre existing memory (and thus does not not perform I/O by mutating the pre existing $c$):\saveSpace\saveSpace
\begin{Assumption}[Determinism]\rm
	if $\emptyset;\Gamma\vdash \e:\T$, 
	$\forall x\, ( \Gamma(x)\neq\Kw{mut}\,\_$), and
	$\sigma | \e'\rightarrow^+ \sigma' | \e''$
	then 
	$\sigma | \e'\Rightarrow^+ \sigma,\_ | \e''$,
	where $\e'=\e[x_1=l_1,\ldots,x_n=l_n]$ and $\Sigma^\sigma;\emptyset\vdash \e':\T$
\end{Assumption}

For each \Q@try@--\Q@catch@, execution preserves the memory needed to continue the execution in case of an error (the memory visible outside of the \Q@try@).\saveSpace\saveSpace
\begin{Assumption}[Strong Exception Safety]\rm
	if $\Sigma^{\sigma,\sigma'};\emptyset\vdash \ctx[\Kw{try}^\sigma\oC\e_0\cC\ \Kw{catch}\ \oC\e_1\cC]:\T$
	and\\*
	$
	\sigma,\sigma'|\ctx[\Kw{try}^\sigma\oC\e_0\cC\ \Kw{catch}\ \oC\e_1\cC]\rightarrow 
	\sigma''|\ctx[\Kw{try}^\sigma\oC\e'\cC\ \Kw{catch}\ \oC\e_1\cC]
	$
	then 
	$\sigma''=\sigma,\_$
	and
	$\Sigma^\sigma;\emptyset\vdash \ctx[\e_1]:\T$
\end{Assumption}


Note that our last well formedness rule requires 
\textsc{update} and \textsc{mcall} to introduce
monitor expressions only over locations
that are not preserved by \Q@try@ blocks.
This can be achieved, since monitors are introduced
around mutation operations
(and \Q@new@ expression),
and \thm{Strong Exception Safety} ensures no mutation happens on preserved memory.


\subheading{Proof of Soundness}
It is hard to prove \thm{Soundness} directly,
so we first define a stronger property,
called \thm{Stronger Soundness}, and
show that it is preserved during reductions by means of conventional
\thm{Progress} and \thm{Subject Reduction} (\thm{Progress} is one of our assumptions,
while \thm{Subject Reduction} relies heavily upon \thm{Subject Reduction Base}).
That is:
\begin{itemize}
\item \thm{Progress} $\wedge$ \thm{Subject Reduction} $\Rightarrow$ \thm{Stronger Soundness}, and
\item \thm{Stronger Soundness} $\Rightarrow$ \thm{Soundness}.
\end{itemize}

\subheading{Stronger Soundness $\Rightarrow$ Soundness}
\thm{Stronger Soundness} depends on
$\mathit{wellEncapsulated}$, $\mathit{monitored}$
and \emph{OK}:\\
\indent $\mathit{wellEncapsulated}(\sigma,\e,l_0)$ iff
$\forall l \in \mathit{erog}(\sigma,l_0), \text{not}\ \mathit{mutatable}(l,\sigma,\e)$.

\noindent The main idea is that an object is well encapsulated if its encapsulated state cannot be
modified by $e$.

An object is \emph{monitored} if execution
is currently inside of a monitor for that object, and
the monitored expression $\e_1$ does not
contain $l$ as a \emph{proper} subexpression:

\indent $\mathit{monitored}(\e,l)$ iff
$\e=\ctx_v[\M{l}{\e_1}{\e_2}]$ and either $\e_1=l$, or $l$ is not inside $\e_1$.

\noindent A monitored object is associated with an expression that can not observe it, but may
reference its internal representation directly.
In this way, we can safely modify its representation before checking its invariant.

The idea is that at the start the object will be valid and $\e_1$ will reference $l$;
but during reduction, $l$ will be used to
modify the object; only after that moment, the object may become invalid.

\noindent\textbf{Define} $\mathit{OK}(\sigma,e)$:\\
\indent $\forall l\in\textit{dom}(\sigma)$
  either\\
\indent\indent 1. $\mathit{garbage}(l,\sigma,\e)$,\\
\indent\indent 2. $\mathit{valid}(\sigma,l)$ and $\mathit{wellEncapsulated}(\sigma,\e,l)$, or\\
\indent\indent 3. $\mathit{monitored}(\e,l)$.

\noindent Finally, the system is in an \emph{OK} state
if all objects in memory, are either
not (transitively) reachable from the expression (thus can be garbage collected),
valid and encapsulated,
or currently monitored.

\begin{theorem}[Stronger Soundness]\rm
if $c:\Kw{Cap};\emptyset\vdash \e_0: \T_0$ and
$c\mapsto\Kw{Cap}\{\_\}|\e_0\rightarrow^+ \sigma|\e$, then
$\emph{OK}(\sigma,\e)$.
\end{theorem}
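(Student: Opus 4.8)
The plan is to establish \thm{Stronger Soundness} as an invariant preserved along reductions, by induction on the length of the sequence $c\mapsto\Kw{Cap}\{\_\}\,|\,\e_0\rightarrow^+\sigma\,|\,\e$. First I would check the base case: the initial configuration is $\mathit{OK}$. Its only location is $c$, whose \invariant\ is defined to return \Kw{true}, so $c$ is $\mathit{valid}$; and since \Kw{Cap} has only \Kw{mut} fields, $\mathit{erog}(c\mapsto\Kw{Cap}\{\_\},c)=\emptyset$, so $c$ is vacuously well encapsulated and falls under clause~2 of $\mathit{OK}$. The entire weight of the argument then rests on a single \emph{preservation} lemma, playing the role of \thm{Subject Reduction}: if $\Sigma^\sigma;\emptyset\vdash\e:\T$, $\mathit{OK}(\sigma,\e)$, and $\sigma\,|\,\e\rightarrow\sigma'\,|\,\e'$, then $\mathit{OK}(\sigma',\e')$, with $\Sigma^{\sigma'};\emptyset\vdash\e':\T'$ supplied by \thm{Subject Reduction Base}. \thm{Progress} enters to guarantee that invariant evaluations never get stuck, so that $\mathit{valid}$ is well defined and, together with \thm{Determinism}, a single \Kw{true} observed at a monitor exit certifies the semantic validity of the object.

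I would prove the preservation lemma by case analysis on the reduction rule of Figure~\ref{f:reductions}, tracking for each location of $\sigma'$ which clause of $\mathit{OK}$ it satisfies. The \textsc{new} and \textsc{update} rules are the generative cases: each wraps the fresh or just-mutated location $l$ in a monitor $\M{l}{l}{l\singleDot\invariant}$, placing $l$ in clause~3. The key observation, for every \emph{other} location $l_0$ that was valid and well encapsulated, is that a mutation targets only a \Kw{mut} receiver (\thm{Mut Field}), so the affected $l$ is $\mathit{mutatable}$ in $\e$; were $l\in\mathit{erog}(\sigma,l_0)$, then $l_0$ would fail $\mathit{wellEncapsulated}$, a contradiction. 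Hence $l\notin\mathit{erog}(\sigma,l_0)$, and since an object's invariant depends only on its encapsulated ROG, the validity of every such $l_0$ is untouched; locations that merely lose reachability drop into clause~1.

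The subtle cases are \textsc{mcall} for a capsule mutator, its interleaving with \textsc{ctxv} reductions inside the generated monitor, and the exits. For a capsule mutator I would use the syntactic restrictions of Section~\ref{s:protocol} --- \Kw{this} occurs exactly once to read the \Kw{capsule} field, no \Kw{mut}/\Kw{read} parameters, non-\Kw{mut} return, empty \Kw{throws} clause --- to show that after the single field read, $l$ is no longer \emph{inside} the monitored body, so $\mathit{monitored}(\e,l)$ holds throughout; meanwhile \thm{Head Not Circular} gives $l\notin\mathit{erog}(\sigma,l)$, and \thm{Capsule Tree} confines the ongoing mutation to the balloon beneath the capsule field, so no valid object outside sees a change. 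At \textsc{monitor exit}, $\M{l}{v}{\Kw{true}}\rightarrow v$ promotes $l$ from clause~3 to clause~2: the observed \Kw{true}, via \thm{Determinism}, witnesses $\mathit{valid}(\sigma,l)$, and well encapsulation is re-established because the residual $v$ no longer mutably reaches $l$'s encapsulated state. The \textsc{try} rules rely on \thm{Strong Exception Safety}: on \textsc{try error} memory rolls back to the annotated $\sigma$, and the well-formedness rule forbidding monitors over preserved locations guarantees that every possibly-invalid monitored object inside the \Kw{try} is discarded, re-entering clause~1. Finally \textsc{access} and ordinary \textsc{mcall} preserve $\mathit{OK}$ almost syntactically, using \thm{Subject Reduction Base} for the residual typing.

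The main obstacle I anticipate is the capsule-mutator/\textsc{ctxv} interaction: I must formulate a sufficiently strong invariant coupling the operational notion $\mathit{monitored}(\e,l)$ with the static tree-of-balloons structure extracted from \thm{Capsule Tree} and \thm{Head Not Circular}, and show it is genuinely inductive under the small steps that mutate the interior of a capsule field. Making the condition that $l$ is not inside $\e_1$ survive each intermediate step --- in particular confirming that the single use of \Kw{this} is consumed before any observable position reappears --- is the delicate point; a secondary difficulty is invoking \thm{Determinism} correctly at the exit to turn a one-path observation of \Kw{true} into the all-paths $\mathit{valid}$ predicate.
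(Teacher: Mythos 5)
Your overall skeleton is the paper's: an induction whose base case is the lone capability object $c$ (valid, vacuously well encapsulated since \Kw{Cap} has only \Kw{mut} fields), plus a progress/preservation argument with case analysis on the reduction rules, invoking \thm{Mut Field}, \thm{Head Not Circular}, \thm{Capsule Tree}, \thm{Determinism} and \thm{Strong Exception Safety} in essentially the right places. However, there is a genuine gap, and it sits exactly at the point you flag as ``the main obstacle'': your preservation lemma carries only $\mathit{OK}$ as the inductive invariant, and $\mathit{OK}$ alone is not inductive. The failing case is \textsc{access}. When a \Kw{capsule} field $\f$ mentioned in the invariant of $l$'s class is read at \Kw{mut} type, the field value appears in the expression as \Kw{mut}, so $l$ irrevocably loses $\mathit{wellEncapsulated}$; to keep $\mathit{OK}$ you must show $l$ lands in the $\mathit{monitored}$ clause, and that requires knowing both that this access sat under a monitor $\M{l}{\ctx_v''}{\_}$ \emph{and} that the occurrence of $l$ just consumed was the \emph{only} occurrence of $l$ in $\ctx_v''$. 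Neither fact is derivable from $\mathit{OK}(\sigma,\e)$ together with typing of the runtime expression: they are properties of how the expression was produced, and they must themselves be carried through every intermediate \textsc{ctxv} step between the \textsc{mcall} and the field access (steps which, e.g.\ via other field accesses, could in principle create new occurrences of $l$).

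The paper closes this gap by strengthening the induction hypothesis with a second predicate, $\mathit{fieldGuarded}(\sigma,\e)$ --- every \Kw{mut}-typed access to a \Kw{capsule} field mentioned in the invariant is individually guarded by a monitor for its receiver containing exactly one occurrence of that receiver --- and by proving a separate preservation lemma for it (\thm{Exposer Instrumentation}): the invariant is established at \textsc{mcall} by the well-formedness of capsule mutators (the single use of \Kw{this}, instance-private fields) and is maintained by all other rules. \thm{Subject Reduction} is then stated and proved for the conjunction of typing, $\mathit{OK}$, and $\mathit{fieldGuarded}$. You correctly sense that ``a sufficiently strong invariant coupling the operational notion $\mathit{monitored}(\e,l)$'' with the capsule structure is needed, but you never formulate it, so your lemma as stated cannot be pushed through. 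A related slip: $\mathit{monitored}(\e,l)$ does \emph{not} hold ``throughout'' a capsule mutator --- immediately after \textsc{mcall} the receiver still occurs inside the monitored body, so $l$ remains in the valid-and-well-encapsulated clause; only the (guarded) field access transfers it to the monitored clause, and legitimizing that handoff is precisely the job of $\mathit{fieldGuarded}$.
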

\noindent Starting from only the capability object,
any well typed expression $\e_0$ can be reduced in an arbitrary number of steps,
and \emph{OK} will always hold.

\begin{theorem}\rm \thm{Stronger Soundness} $\Rightarrow$ \thm{Soundness}
\end{theorem}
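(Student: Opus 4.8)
The plan is to obtain \thm{Soundness} as a direct corollary of \thm{Stronger Soundness}. Assume the hypotheses of \thm{Soundness}, i.e. $c\mapsto\Kw{Cap}\{\_\}|\e\rightarrow^+\sigma|\ctx_v[r_l]$ with $c:\Kw{Cap};\emptyset\vdash\e:\T$, and write $\e'=\ctx_v[r_l]$. Then \thm{Stronger Soundness} gives $\mathit{OK}(\sigma,\e')$ at once. Let $l$ be the location carried by the redex $r_l$ (the $l$ of the grammar clause ``redex containing $l$''). Since $\e'$ is well typed (by \thm{Subject Reduction}, which transports the initial typing along the reduction) and $l$ occurs in it, every such location lives in memory, so $l\in\textit{dom}(\sigma)$. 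I then instantiate the three-way disjunction in the definition of $\mathit{OK}(\sigma,\e')$ at this $l$.

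Two of the three clauses are disposed of immediately. The clause $\mathit{garbage}(l,\sigma,\e')$ is impossible, because $l$ occurs syntactically inside $r_l$ and hence inside $\e'$, so it is reachable from the expression and not collectable. The clause $\mathit{valid}(\sigma,l)$ is exactly the first disjunct of the conclusion of \thm{Soundness}, so that case is finished. It remains to treat $\mathit{monitored}(\e',l)$, for which I must show it implies $\mathit{trusted}(\ctx_v,r_l)$.

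For the monitored case I would compare two evaluation-context decompositions of $\e'$: the one supplied by $\mathit{monitored}$, namely $\e'=\ctx_v^M[\M{l}{\e_1}{\e_2}]$ with $\e_1=l$ or $l$ not a proper subexpression of $\e_1$, and the one exposing the active redex, $\e'=\ctx_v[r_l]$. Since a monitor expression is neither a value nor a redex, unique decomposition of an expression into an evaluation context and a redex forces the active path to pass through the monitor. Were the redex inside $\e_1$, then $\e_1$ would be a non-value containing $l$ as a proper subexpression (as $r_l$ mentions $l$ and $r_l\neq l$), contradicting the $\mathit{monitored}$ condition; hence $\e_1$ is a value $v$ and the redex sits in the check slot, so $\ctx_v=\ctx_v^M[\M{l}{v}{\ctx_v''}]$ for some evaluation context $\ctx_v''$. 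Finally, $\e_2$ is the running invariant check: it is created as $l\singleDot\invariant$ and every reduct keeps this shape. Because the body of an \Q@invariant@ method may mention \Q@this@ only as the receiver of \Q@imm@/\Q@capsule@ field accesses (a well-formedness restriction) and issues no further \Q@invariant@ call, and because reading such a field never reproduces $l$ (\thm{Head Not Circular}), the only redexes in $\e_2$ that reference $l$ are the initial call $l\singleDot\invariant$ (forcing $\ctx_v''=\square$) and field reads $l\singleDot\f$ (with arbitrary $\ctx_v''$). These are exactly the two clauses of $\mathit{trusted}(\ctx_v,r_l)$, completing the case.

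I expect the last paragraph to be the only delicate part. The case split and the elimination of $\mathit{garbage}$ and $\mathit{valid}$ are routine; the real work is the positional argument tying the $\mathit{monitored}$ decomposition to the redex decomposition, together with the bookkeeping that the check slot $\e_2$ remains a reduct of $l\singleDot\invariant$ whose only occurrences of $l$ are field-access receivers. The former rests on unique decomposition of evaluation contexts, the latter on the syntactic restrictions on \Q@invariant@ bodies and on \thm{Head Not Circular}; once both are in hand, the matching against $\mathit{trusted}$ is immediate.
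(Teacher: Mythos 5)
Your proposal is correct and follows essentially the same route as the paper's own proof: instantiate the three-way $\mathit{OK}$ disjunction at the redex location $l$, dismiss $\mathit{garbage}$ by reachability, observe that $\mathit{valid}$ is already the first disjunct of the conclusion, and in the $\mathit{monitored}$ case show the redex cannot lie in $\e_1$ and must therefore be the initial $l\singleDot\invariant$ call or a field read $l\singleDot\f$ in the check slot, matching the two clauses of $\mathit{trusted}$. Your write-up is in fact somewhat more careful than the paper's (explicit appeal to unique decomposition and to \thm{Head Not Circular} to rule out field reads reintroducing $l$), but the underlying argument is the same.
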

\begin{proof}
\noindent By \thm{Stronger Soundness}, each $l$ in the current redex must be \emph{OK}:
\begin{enumerate}
	\item If $l$ is \emph{garbage}, it cannot be in the current redex, a contradiction.
	\item If $\mathit{valid}(\sigma,l)$, then $l$ is valid, so thanks to \thm{Determinism}
	no invalid object could be observed.
	\item Otherwise, if $\mathit{monitored}(\e,l)$ then either:
	\begin{itemize}
	 \item we are executing inside of $\e_1$, thus the current redex is inside of a sub expression of the monitor that does not contain $l$, a contradiction.
	 \item or we are executing inside $\e_2$:
	 by our reduction rules, all monitor expressions start with
	 $\e_2=l.\invariant$, thus the first execution step
	 of $\e_2$ is trusted. Further execution steps are also trusted, since by well formedness the body of invariant methods only use \Q@this@ (now replaced with $l$) to read fields.
	\end{itemize}
\end{enumerate}
In any of the possible cases above, \thm{Soundness} holds for $l$, and so it holds for all redexes.
\end{proof}

\subheading{Subject Reduction}
\noindent\textbf{Define} $\mathit{fieldGuarded}(\sigma,\e)$:\\*
\indent$\forall \ctx$ such that $\e=\ctx[l\singleDot\f] $
and $\Sigma^\sigma(l).f=\Kw{capsule}\,\_$, and $\f\mathrel{\mathit{inside}} \Sigma^\sigma(l).\invariant$\\*
\indent\indent either
$\forall T, \forall C, \Sigma^\sigma;\x:\Kw{mut}\,C\,\not\vdash\ctx[\x]:T$, or\\*
\indent\indent $\ctx=\ctx'[\M{l}{\ctx''}{\e}]$ and $l$ is contained exactly once in $\ctx''$.

That is, all \Q@mut@ capsule field accesses are individually guarded by monitors.
Note how we use $C$ in $\x:\Kw{mut}\,C$ to guess the type of the accessed field,
and that we use the full context $\ctx$, instead of the evaluation context $\ctx_v$,
to refer to field accesses everywhere in the expression $\e$.

\begin{theorem}[Subject Reduction]\rm
if $\Sigma^{\sigma_0};\emptyset\vdash e_0: T_0$,
$\sigma_0|e_0\rightarrow \sigma_1|e_1$,
$\mathit{OK}(\sigma_0,\e_0)$
and
$\mathit{fieldGuarded}(\sigma_0,\e_0)$
then
$\Sigma^{\sigma_1};\emptyset\vdash e_1: T_1$,
$\mathit{OK}(\sigma_1,e_1)$ and
$\mathit{fieldGuarded}(\sigma_1,\e_1)$
\end{theorem}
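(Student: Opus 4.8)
The plan is to prove the three conclusions by a single case analysis on the reduction rule that fires at the redex. Write $\e_0=\ctx_v[r]$, where $r$ is reduced by one of the non-congruence rules of Figure~\ref{f:reductions} and the congruence rule \textsc{(ctxv)} contributes only the surrounding evaluation context $\ctx_v$, which I keep arbitrary throughout (so the argument is not a structural induction on $\ctx_v$, but a direct analysis of each base rule inside an arbitrary context). The typing conclusion $\Sigma^{\sigma_1};\emptyset\vdash\e_1:T_1$ is immediate from \thm{Subject Reduction Base}, so essentially all the work lies in re-establishing $\mathit{OK}(\sigma_1,\e_1)$ and $\mathit{fieldGuarded}(\sigma_1,\e_1)$.

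For $\mathit{OK}$ I would fix an arbitrary $l\in\mathit{dom}(\sigma_1)$ and re-derive one of its three alternatives. The memory-preserving, monitor-neutral rules (\textsc{access}, \textsc{try enter}) leave $\sigma$ unchanged, hence preserve $\mathit{valid}$ verbatim; I only check that shrinking or retagging the expression cannot turn a non-$\mathit{mutatable}$ occurrence into a $\mathit{mutatable}$ one, which holds because these rules introduce no new $\Kw{mut}$-typed occurrence of any location. For \textsc{(new)} and \textsc{(update)} the fresh, respectively freshly-mutated, location is wrapped in $\M{l}{l}{l\singleDot\invariant}$, so it falls under the \emph{monitored} alternative with $\e_1=l$; the crucial observation for \textsc{(update)} is that the mutated $l$ cannot have lain in the $\mathit{erog}$ of any alternative-2 object $l'$, since the redex $l\singleDot\f\equals v$ (typable only at a $\Kw{mut}$ receiver by \thm{Mut Field}) makes $l$ $\mathit{mutatable}$, contradicting $\mathit{wellEncapsulated}(\sigma_0,\e_0,l')$. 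Consequently the write lies outside every alternative-2 object's encapsulated ROG, and by the frame property implied by the well-formedness restriction that $\invariant$ bodies read only $\Kw{imm}$/$\Kw{capsule}$ fields (together with \thm{Determinism}) their validity is untouched; for \textsc{(new)} the same frame reasoning shows adding the fresh object does not disturb validity of pre-existing ones.

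The remaining cases drive the state transitions of $\mathit{OK}$. In \textsc{(monitor exit)}, $\M{l}{v}{\Kw{true}}$ reduces to $v$, so $l$ must move from \emph{monitored} to valid-and-well-encapsulated: validity follows because the check $\e_2$ reduced to $\Kw{true}$, i.e. $l\singleDot\invariant\Rightarrow^+\Kw{true}$, and well-encapsulation is re-established from \thm{Capsule Tree} and \thm{Head Not Circular}. In \textsc{(mcall)}, an ordinary call is handled by substituting values for the $\Kw{read}/\Kw{imm}/\Kw{capsule}$ parameters and appealing to \thm{Determinism}-style purity reasoning and body well-formedness, whereas a capsule mutator has the body wrapped in $\M{l}{\e}{l\singleDot\invariant}$, placing the receiver in the \emph{monitored} alternative; since capsule mutators use $\Kw{this}$ exactly once—to read the capsule field—the substituted body has $l$ occurring exactly once, which is precisely the guarded shape $\M{l}{\ctx''}{\e}$ demanded by $\mathit{fieldGuarded}$. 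For \textsc{(try ok)} and \textsc{(try error)}, \thm{Strong Exception Safety} gives $\sigma''=\sigma,\_$ so the preserved memory is intact, and the well-formedness rule that try-preserved locations are never monitored ensures the discarded branch drops only objects that are now garbage. This also settles $\mathit{fieldGuarded}$: the only rule creating a new capsule-field access is \textsc{(mcall)} (accesses in bodies are $\Kw{this}\singleDot\f$ and become $l\singleDot\f$ only by substitution, and it supplies the guard), while no rule exposes a previously guarded access—\textsc{(monitor exit)} fires only on a fully reduced body, and \textsc{(try error)} discards guard and access together.

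The main obstacle will be the encapsulation bookkeeping shared by \textsc{(monitor exit)}, \textsc{(update)}, and the capsule-mutator case of \textsc{(mcall)}. I must argue that while a capsule mutator runs, the receiver $l$ is genuinely unobservable—its sole occurrence is the guarded capsule-field read, so no alternative-2 obligation is imposed on it—and that the moment the monitor's invariant check succeeds, $l$ together with its encapsulated ROG is well-encapsulated again. This is exactly where \thm{Capsule Tree} and \thm{Head Not Circular} do the heavy lifting, certifying that the mutated capsule subtree hangs off $l$ as a single balloon reachable only through $l$; translating their semantic conclusions into the syntactic $\mathit{mutatable}$ and $\mathit{monitored}$ predicates, while leaving the context $\ctx_v$ arbitrary, is the delicate point of the whole argument.
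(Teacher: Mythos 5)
Your overall architecture does match the paper's: one \textsc{ctxv} application plus a base rule, typing from \thm{Subject Reduction Base}, a separate preservation argument for $\mathit{fieldGuarded}$ (the paper isolates this as its \thm{Exposer Instrumentation} lemma, justified exactly as you justify it), and a rule-by-rule case analysis for $\mathit{OK}$; your \textsc{update}, \textsc{new} and \textsc{try} cases coincide with the paper's. However, there is a genuine flaw in how you distribute the work between \textsc{mcall} and \textsc{access}, and it sits precisely where $\mathit{fieldGuarded}$ is supposed to earn its keep in the $\mathit{OK}$ argument.

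You claim that \textsc{mcall} on a capsule mutator ``places the receiver in the monitored alternative''. It does not: by the paper's definition, $\mathit{monitored}(\e,l)$ requires $\e=\ctx_v[\M{l}{\e_1}{\e_2}]$ with $\e_1=l$ or $l$ \emph{not} occurring inside $\e_1$, and immediately after \textsc{mcall} the monitored body still contains the unique occurrence $l\singleDot\f$ --- as you yourself observe when checking the guarded shape for $\mathit{fieldGuarded}$. So the receiver stays in alternative 2 (valid and well-encapsulated) at that point, which is consistent because the capsule content has not yet appeared literally in the expression. The transition to alternative 3 happens at the later \textsc{access} step, the very case you dismiss as trivial. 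Your claim that \textsc{access} ``introduces no new $\Kw{mut}$-typed occurrence of any location'' is false: reading a capsule field as $\Kw{mut}$ replaces $l\singleDot\f$ by the stored value $v$, a fresh $\Kw{mut}$-typed occurrence of a location in $\mathit{erog}(\sigma,l)$, which destroys $\mathit{wellEncapsulated}(\sigma,\e,l)$. The paper's treatment of this case is one of its two substantive cases: by $\mathit{fieldGuarded}(\sigma_0,\e_0)$ the access sat inside a monitor on $l$ whose body contained that single occurrence of $l$; since $v\neq l$ by \thm{Head Not Circular}, after the step $l$ no longer occurs in the monitored body, so $\mathit{monitored}(\e_1,l)$ now holds and $l$ is $\mathit{OK}$ under alternative 3 (with \thm{Capsule Tree} handling objects that reach $v$ only through $l$). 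You have all the needed ingredients --- $\mathit{fieldGuarded}$, \thm{Head Not Circular}, \thm{Capsule Tree} --- but as written your \textsc{access} case asserts something false and your \textsc{mcall} case misapplies the definition of $\mathit{monitored}$, so the intermediate states of a capsule-mutator execution are never correctly accounted for.
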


\begin{theorem}\rm
	\thm{Progress} + \thm{Subject Reduction} $\Rightarrow$ \thm{Stronger Soundness}
\end{theorem}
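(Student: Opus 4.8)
The plan is to prove a slightly stronger statement by induction on the number of reduction steps: that the \emph{conjunction} of well-typedness, $\mathit{OK}$, and $\mathit{fieldGuarded}$ is preserved along every reduction starting from the initial configuration. This bundle is precisely the hypothesis/conclusion package of \thm{Subject Reduction}, so that theorem will supply the inductive step almost verbatim; the genuinely new work is establishing the bundle at the starting configuration and then chaining single steps up to $\rightarrow^+$.

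First I would check the \textbf{base case}, i.e.\ that $c\mapsto\Kw{Cap}\{\_\}\mid\e_0$ satisfies all three predicates. Well-typedness is immediate, since by hypothesis $c:\Kw{Cap};\emptyset\vdash\e_0:\T_0$ and $\Sigma^{c\mapsto\Kw{Cap}\{\_\}}$ is exactly $c:\Kw{Cap}$. For $\mathit{OK}$ the only location in memory is $c$; by our assumptions on $\Kw{Cap}$ its invariant method returns $\Kw{true}$, so $c$ is \emph{valid} (using \thm{Determinism} to see the evaluation is deterministic and memory-preserving), and since $\Kw{Cap}$ has only $\Kw{mut}$ fields its encapsulated ROG is empty, making $\mathit{wellEncapsulated}$ hold vacuously; hence clause~2 of $\mathit{OK}$ covers $c$. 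For $\mathit{fieldGuarded}$, $c$ is the only location that can appear in a well-typed $\e_0$, and $\Kw{Cap}$ declares no $\Kw{capsule}$ field, so the quantified premise is never met and the predicate holds vacuously.

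Next I would run the \textbf{induction} on the length $n\geq 1$ of the sequence $c\mapsto\Kw{Cap}\{\_\}\mid\e_0\rightarrow^+\sigma\mid\e$. For $n=1$, the single step satisfies the hypotheses of \thm{Subject Reduction} by the base case, so its conclusion yields well-typedness, $\mathit{OK}(\sigma,\e)$ and $\mathit{fieldGuarded}(\sigma,\e)$. For $n>1$, I would split the sequence as $\rightarrow^{n-1}$ followed by one step: the induction hypothesis gives the three predicates at the penultimate configuration, and \thm{Subject Reduction} transports all three across the last step. In every case $\mathit{OK}(\sigma,\e)$ holds, which is exactly the conclusion of \thm{Stronger Soundness}.

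The main obstacle, and the reason the three predicates must be carried \emph{together}, is that \thm{Subject Reduction} consumes well-typedness, $\mathit{OK}$ \emph{and} $\mathit{fieldGuarded}$ as joint hypotheses; dropping any one would break the chaining, so the induction cannot track $\mathit{OK}$ in isolation. \thm{Progress} plays the complementary role: it guarantees that a well-typed configuration that is neither a value nor an \emph{error} can always take a step, so that no reachable intermediate configuration is stuck. Together, preservation (via \thm{Subject Reduction}) and non-stuckness (via \thm{Progress}) form the standard safety package that establishes \thm{Stronger Soundness}.
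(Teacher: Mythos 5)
Your proposal is correct and follows essentially the same route as the paper's proof: induction on the length of the reduction sequence, with the base case establishing the invariant for the initial memory containing only $c\mapsto\Kw{Cap}\{\_\}$ and \thm{Subject Reduction} supplying each inductive step. In fact your version is slightly more explicit than the paper's, which spells out the base case only for $\mathit{OK}$ and leaves implicit that well-typedness and $\mathit{fieldGuarded}$ must also be carried in the induction invariant for \thm{Subject Reduction} to remain applicable at each step.
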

\begin{proof}
This proof proceeds by induction in the usual manner.

\emph{Base case}: At the start of execution, memory only contains $c$: since $c$ is defined to always be $\mathit{valid}$, and has only \Q@mut@ fields, it is trivially $\mathit{wellEncapsulated}$, thus $\mathit{OK}(c\mapsto\Kw{Cap},e)$.

\emph{Induction}: By \thm{Progress}, we always have another evaluation step to take, by \thm{Subject Reduction} such a step will preserve $\mathit{OK}$, and so by induction, $\mathit{OK}$ holds after any number of steps.

Note how for the proof garbage collectability is important:
when the \Q@invariant()@ method evaluates to \Q@false@,
execution can continue only if the offending object is classified as \emph{garbage}.
\end{proof}

\subheading{Exposer Instrumentation}
We first introduce a lemma derived from our well formedness criteria and the type system:
\begin{Lemma}[Exposer Instrumentation]\rm
If $\sigma_0 | \e_0\rightarrow \sigma_1 |\e_1$ and
$\text{fieldGuarded}(\sigma_0,\e_0)$
\\*
then $\text{fieldGuarded}(\sigma_1,\e_1)$.
\end{Lemma}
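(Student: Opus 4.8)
The plan is a case analysis on the reduction rule, writing $\e_0=\ctx_v[\redex_0]$ and $\e_1=\ctx_v[\redex_1]$ via \textsc{(ctxv)} and inspecting each base rule. Call an occurrence $l\singleDot\f$ \emph{dangerous} if its position is \Kw{mut}-typable, $\Sigma^\sigma(l)\singleDot\f=\Kw{capsule}\,\_$, and $\f$ is inside $\Sigma^\sigma(l)\singleDot\invariant$; $\mathit{fieldGuarded}$ says every dangerous access is wrapped in a monitor $\M{l}{\ctx''}{\_}$ with $l$ occurring once in $\ctx''$. The load-bearing observation I would establish first is that a dangerous access can arise only inside the body of a capsule mutator: by well formedness every field access in a method body is $\Kw{this}\singleDot\f$ and parameters never appear as field-access receivers, so such an access becomes $l\singleDot\f$ only when \textsc{(mcall)} performs the substitution $\Kw{this}{=}l$; and the assumed type system only \Kw{mut}-types an access of a \Kw{capsule}-in-\invariant field when its receiver is $\Kw{this}$ (fields being instance private), so no bare $l\singleDot\f$ --- e.g. one exposed in the main expression after \textsc{(new)} then \textsc{(monitor exit)} --- is ever \Kw{mut}-typable. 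This reduces the lemma to: \textsc{(mcall)} creates dangerous accesses together with their guard, and no rule destroys a guard while its dangerous access survives.

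For \textsc{(mcall)}, reducing $l\singleDot\m(\vs)$ to the body of $C\singleDot\m$ under $\Kw{this}{=}l$, the side condition firing is exactly the statement that $C\singleDot\m$ is a capsule mutator (a \Kw{mut} method accessing a \Kw{capsule} field $\f$ with $\f$ inside $C\singleDot\invariant$), and the rule wraps the body in $\M{l}{\cdot}{l\singleDot\invariant}$. The well formedness requirement that a capsule mutator mention $\Kw{this}$ exactly once then gives that $l$ occurs exactly once in the monitored subexpression, at the single guarded access, which is precisely the monitor-guarded disjunct of $\mathit{fieldGuarded}$. When the side condition does not fire, then for every \Kw{capsule}-in-\invariant field the method is either not \Kw{mut} --- in which case \thm{Mut Field} forces $\Kw{this}\singleDot\f$, hence $l\singleDot\f$, to not be \Kw{mut}-typable --- or does not access $\f$ at all; the \Kw{read} \invariant method itself falls under the first subcase, so its own \Kw{capsule}-field reads need no monitor.

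For the remaining rules I would check two things. First, classification stability: ``$\f$ is \Kw{capsule}'' and ``$\f$ inside \invariant'' depend only on the class of $l$, fixed by the program table, and neither \textsc{(update)} (a field-value rewrite) nor \textsc{(new)} (a fresh-location extension of $\Sigma^\sigma$) changes the class of any existing location; and by \thm{Subject Reduction Base} together with \thm{Mut Field}, filling an evaluation-context hole with a value of equal type, or weakening $\Sigma^\sigma$ with a fresh location, cannot turn a non-\Kw{mut}-typable position into a \Kw{mut}-typable one, so the safe disjunct of $\mathit{fieldGuarded}$ is never lost. Second, no guard is removed while its access survives: \textsc{(access)}, \textsc{(monitor exit)} and \textsc{(try ok)} replace their redex by a value containing no field access; \textsc{(update)} and \textsc{(new)} introduce only the method call $l\singleDot\invariant$, not a field access; and \textsc{(try enter)} merely annotates a $\Kw{try}$.

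The two steps I expect to fight hardest are the \Kw{mut}-typability reasoning and \textsc{(try error)}. For the former I must make precise that the \textsc{(mcall)} substitution is the \emph{unique} origin of \Kw{mut}-typable \Kw{capsule}-in-\invariant accesses, which rests squarely on the assumed type discipline for capsule fields --- in particular on the \Kw{capsule}-receiver case, where no monitor is inserted yet the call consumes $l$; isolating and using the correct consequence of that discipline is the real crux. For \textsc{(try error)}, where memory shrinks from $\sigma,\sigma'$ to $\sigma$ and the entire $\Kw{try}$ body (with its monitors and any dangerous accesses) is discarded in favour of the catch body, I would invoke the well formedness rule that a location preserved by a $\Kw{try}$ is never monitored inside its body --- so every discarded monitor guarded only a fresh, now-unreachable object --- together with \thm{Strong Exception Safety} to conclude that the catch body refers only to the retained memory $\sigma$, leaving its accesses classified and guarded exactly as they were in $\e_0$.
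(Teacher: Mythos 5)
Your proposal is correct and follows essentially the same route as the paper's own (much terser) proof: identify \textsc{mcall} as the only rule that introduces new field accesses, use the well formedness requirement that method-body accesses are $\Kw{this}\singleDot\f$ (with \Q@this@ used exactly once in capsule mutators) to get the monitor-guarded disjunct when the side condition fires, appeal to \thm{Mut Field} for non-\Kw{mut} methods (including \Q@invariant@ itself), and dispose of \textsc{monitor exit} by noting monitors are only removed around values. Your additional explicit treatment of \textsc{try error} (via \thm{Strong Exception Safety} and the well formedness rule on monitored locations) and of the stability of the non-\Kw{mut}-typable disjunct under context changes merely fills in cases the paper's proof leaves implicit, rather than constituting a different argument.
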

\begin{proof}
The only rule that can
introduce a new field access is \textsc{mcall}.
In that case, \thm{Exposer Instrumentation} holds
by well formedness (all field accesses in methods are of the form \Q@this.f@),
since \textsc{mcall} inserts a monitor while invoking capsule mutator methods, and not field accesses themselves. If however the method is not a \Q@mut@ method but still accesses a capsule field, by \thm{Mut Field} such a field access expression cannot be typed as \Q@mut@ and so no monitor is needed.

Note that \textsc{monitor exit} is fine because monitors are removed only when
 $e_1$ is a value.
\end{proof}

\subheading{Proof of Subject Reduction}
Any reduction step can be obtained
by exactly one application of the \textsc{ctxv} rule and one other rule. Thus the proof can simply proceed by cases on the other applied rule.

By \thm{Subject Reduction Base} and \thm{Exposer Instrumentation},
$\Sigma^{\sigma_1};\emptyset\vdash e_1: T_1$ and  $\mathit{fieldGuarded}(\sigma_1,\e_1)$. So we just need to proceed by cases on the reduction rule applied to verify that $\mathit{OK}(\sigma_1,\e_1)$ holds:
\begin{enumerate}
\item (\textsc{update}) $\sigma|l\singleDot f\equals v\rightarrow \sigma'|\e'$:
	\begin{itemize}
	  \item By \textsc{update} $\e'=\M{l}{l}{l}\singleDot\Kw{invariant}\oR\cR\cR$, thus $\mathit{monitored}(\e,l)$.
	  \item Every $l_1$ such that $l\in \mathit{rog}(\sigma,l_1)$ will verify the same case as the former step:
	  \begin{itemize}
	  	\item If it was $\mathit{garbage}$, clearly it still is.
	  	\item If it was $\mathit{monitored}$, it still is.
	    \item Otherwise it was $\mathit{valid}$ and $\mathit{wellEncapsulated}$:
			\begin{itemize}
				\item If $l\in \mathit{erog}(\sigma,l_1)$ we have a contradiction since $\mathit{mutatable}(l, \sigma, e)$, (by \thm{Mut Field})
		    	\item Otherwise, by our well formedess criteria that \Q@invariant@ only accesses \Q@imm@ and \Q@capsule@ fields, and by \thm{Determinism}, it is clearly the case that $\mathit{valid}$ still holds;
				By \thm{Head Not Circular} it cannot be the case that $l\in \mathit{erog}(\sigma',l_1)$, and so $l_1$ is still $\mathit{wellEncapsulated}$.
		  	\end{itemize}
	  \end{itemize}
	  \item Every other $l_0$ is not in the reachable object graph of $l$,
	  thus it being $\mathit{OK}$ could not have been effected by this reduction step.
	\end{itemize}

\item (\textsc{access}) $\sigma|l\singleDot f \rightarrow \sigma|v$:
	\begin{itemize}
		\item If $l$ was \emph{valid} and \emph{wellEncapsulated}:
		\begin{itemize}
			\item If we have now broken \emph{wellEncapsulated}, we must have made something in its \emph{erog} \emph{mutatable}. As we can only type \Q@capsule@ fields as \Q@mut@ and not \Q@imm@ fields, by \thm{Mut Field} we must have that $f$ is \Q@capsule@ and $l\singleDot f$ is being typed as \Q@mut@. By $\mathit{fieldGuarded}(\sigma_0,\e_0)$, the former step must have been inside a monitor \M{l}{\ctx_v[l.f]}{\e}
		    and the $l$ under reduction was the only occurrence of $l$.
		    Since $f$ is a capsule, we know that $l\notin \text{erog}(\sigma,l)$
		    by \thm{Head Not Circular}. Thus in our new step $l$ is not\emph{inside} $\ctx_v[v]$. Thus $l$ must be \emph{monitored} and hence it is $\mathit{OK}$.

		    \item Otherwise, $l$ is still $\mathit{OK}$
    	\end{itemize}

	\item Suppose some other $l_0$ was \emph{wellEncapsulated} and \emph{valid}:
	\begin{itemize}
			\item If $l$ was in the \emph{rog} of $l_0$, by \thm{Capsule Tree}, if $l$ was in the \emph{rog} of $l$, then $v$ can only be reached from $l_0$ by passing through $l$, and so we could not have made $l_0$ non \emph{wellEncapsulated}. In addition, since only things in the \emph{erog} can be referenced by \Q@invariant@, validity can not depend on $l$, and by \thm{Determinism} it is still the case that $l_0$ is \emph{valid}. And so we can't have effected $l_0$ being $\mathit{OK}$.
			\item Otherwise, this reduction step could not have affected $l_0$, so $l_0$ is still $\mathit{OK}$.
	\end{itemize}

	\item Nothing that was $\mathit{garbage}$ could have been made reachable by this expression, since the only value we produced was $v$ and it was reachable through $l$ (and so could not have been $\mathit{garbage}$), thus $l$ is still $\mathit{OK}$.

	\item As we don't change any monitors here, nothing that was $\mathit{monitored}$ could have been made un-$\mathit{monitored}$, and so it is still $\mathit{OK}$.
\end{itemize}

\item (\textsc{mcall}, \textsc{try enter} and \textsc{try ok}):

	These reduction steps do not modify memory, the memory locations reachable inside of main expression, or any monitor expressions. Therefore it cannot have any effect on the $\mathit{garbage}$, \emph{wellEncapsulated}, \emph{valid} (due to \thm{Determinism}), or $\mathit{monitored}$ properties of any memory locations, thus $\mathit{OK}$ still holds.

\item (\textsc{new}) $\sigma|\Kw{new}\ C\oR\vs\cR\rightarrow \sigma,l\mapsto C\{\vs\}| \M{l}{l}{l.\invariant}$:

	Clearly the newly created object, $l$, is \emph{monitored}. As for \textsc{mcall}, other objects and properties are not disturbed, and so $\mathit{OK}$ still holds.

\item (\textsc{monitor exit}) $\sigma|\M{l}{v}{\Kw{true}}\rightarrow \sigma|v$:
\begin{itemize}
	\item As monitor expressions are not present in the original source code, it must have been introduced by \textsc{update}, \textsc{mcall}, or \textsc{new}. In each case the 3\textsuperscript{rd} expression started of as $l\singleDot\Kw{invariant}\oR\cR$, and it has now (eventually) been reduced to $\Kw{true}$, thus by \thm{Determinism} $l$ is \emph{valid}.

	\item  If the monitor was introduced by \textsc{update}, then $v = l$. We must have had that $l$ was well encapsulated before \textsc{update} was executed (since it can't have been \emph{garbage} and \emph{monitored}, as \textsc{update} itself preserves this property and we haven't modified memory in anyway, we must still have that $l$ is \emph{wellEncapsulated}. As $l$ is \emph{valid} and \emph{wellEncapsulated}, it is $\mathit{OK}$.

	\item If the monitor was introduced by \textsc{mcall}, then it was due to calling a capsule mutator method that mutated a field $f$.
	\begin{itemize}
		\item A location that was \emph{garbage} obviously still is, and so is also $\mathit{OK}$.
		\item No location that was \emph{valid} could have been made invalid since this reduction rule performs no mutation of memory. If a location was \emph{wellEncapsulated} before, the only way it could be non \emph{wellEncapsulated} is if we somehow leaked a \Q@mut@ reference to something, but by our well-formedness rules, $v$ cannot be typed as \Q@mut@ and so we can't have affected \emph{wellEncapsulated}, hence such thing is still $\mathit{OK}$.
		\item The only location that could have been made un \emph{monitored} is $l$ itself. By our well formedness criteria, $l$ was only used to modify $l.f$, and we have no parameters by which we could have made $l.f$ non \emph{wellEncapsulated}, since that would violate \thm{Capsule Tree}. As nothing else in $l$ was modified, and it must have been \emph{wellEncapsulated} before the \textsc{mcall}, and so it still is. In addition since  $l$ is valid, it is $\mathit{OK}$.
	\end{itemize}
	\item Otherwise the monitor was introduced by \textsc{new}. Since we require that \Q@capsule@ fields and \Q@imm@ fields are only initialised to \Q@capsule@ and \Q@imm@ expressions, by \thm{Capsule Tree}, the resulting value, $l$, must be \emph{wellEncapsulated}, since $l$ is also \emph{valid} we have that $l$ is $\mathit{OK}$.

\end{itemize}

\item (\textsc{try error}) $\sigma,\sigma_0|\Kw{try}^\sigma\oC \mathit{error}\cC\ \Kw{catch}\ \oC\e\cC\rightarrow \sigma|\e$:

	By \thm{Strong Exception Safety}, we know that $\sigma_0$ is $\mathit{garbage}$ with respect to $\ctx_v[\e]$. By our well formedness criteria, no location inside $\sigma$ could have been \emph{monitored}.
	Since we don't modify memory, everything in $\sigma_0$ is $\mathit{garbage}$ and nothing inside $\sigma$ was previously monitored, it is still clearly the case that everything in $\sigma$ is $\mathit{OK}$. 
\end{enumerate}
\saveSpace
\section{The Hamster Example in Spec\#}\saveSpace
\label{s:hamster}
\lstset{language={[Sharp]C}, morekeywords={invariant,ensures,requires,expose,exists,capsule}}

In this section we describe exactly why we chose to annotate the example from Section~\ref{s:intro} in the way we did. For brevity, we will assume the default accessibility is \Q@public@, whilst in both Spec\# and C\#, it is actually \Q@private@.

\subheading{The \Q@Point@ Class} 
The typical way of writing a \Q@Point@ class in C\# is as follows:
\begin{lstlisting}
class Point {
	double x, y;
	Point(double x, double y) { this.x = x; this.y = y; }
}
\end{lstlisting}

This works exactly as is in Spec\#, however we have difficulty if we want to define equality of \Q@Point@s (see below).

\subheading{The \Q@Hamster@ Class} 
The \Q@Hamster@ class in C\# would simply be:
\begin{lstlisting}
class Hamster {
	Point pos;
	Hamster(Point pos) { this.pos = pos; }
}
\end{lstlisting}

Though this is legal in Spec\#, it is practically useless. Spec\# has no way of knowing whether \Q@pos@ is \emph{valid} or \emph{consistent}. If \Q@pos@ is not known to be valid, one will be unable to pass it to almost any method, since by default methods implicitly require their receivers and arguments to be valid (compare this with our invariant protocol, which guarantees that any reachable object is valid).
If \Q@pos@ is not known to be consistent, one will be unable to mutate it, by updating one of its fields or by passing it as an argument (or receiver) to a non \Q@Pure@ method.
Though we don't want \Q@pos@ to ever mutate, Spec\# currently has no way of enforcing that an \emph{instance} of a non immutable class is itself immutable\footnote{There is a the describes a simple solution to this problem: assign ownership of the object to a special predefined `freezer' object, which never gives up mutation permission~\cite{DBLP:conf/vstte/LeinoMW08}, however this does not appear to have been implemented; this would provide similar flexibility to the TM system we use, which allows an initially mutable object to be promoted to immutable.}, as such we will simply refrain from mutating it.

To enable Spec\# to reason about \Q@pos@'s validity, we will require that it be a \emph{peer} of the enclosing \Q@Hamster@; we can do this by annotating \Q@pos@ with \Q@[Peer]@. Peers are objects that have the same owner, implying that  whenever one is valid and/or consistent, the other one also is. This means that if we have a \Q@Hamster@, we can use its \Q@pos@, in the same ways as we could use the \Q@Hamster@.

To simplify instantiation of \Q@Hamster@s, their constructors will take unowned \Q@Point@s, Spec\# will then automatically make such \Q@Point@ a peer. This is achieved by taking a \Q@[Captured]@ \Q@Point@ in the constructor (note how similar this is to taking a \Q@capsule@ \Q@Point@). Note that unlike our system, this prevents multiple \Q@Hamster@s from sharing the same \Q@Point@, unless both \Q@Hamster@s have the same owner, if \Q@Point@ were immutable, there would be no such restriction.

With the aforementioned modifications, the \Q@Hamster@ becomes:
\begin{lstlisting}
class Hamster {
  [Peer] Point pos;
  Hamster([Captured] Point pos) { this.pos = pos; }
}
\end{lstlisting}

We don't want \Q@Point@ to be an immutable/value type, however if it were, the original unannotated version would not have any problems.

\subheading{The \Q@Cage@ Class} 
The natural way to write this class in C\#, if it had native support for class invariants like Spec\#, would be:
\begin{lstlisting}
class Cage {
  Hamster h;
  List<Point> path;
  Cage(Hamster h, List<Point> path){this.h=h; this.path=path;}
  invariant this.path.Contains(this.h.pos);
  void Move() { 
    int index = this.path.IndexOf(this.h.pos);
    this.h.pos = this.path[index % this.path.Count]; } 
}
\end{lstlisting}

However for the above \Q@invariant@ to be admissible in Spec\#, \Q@this.path@ and \Q@this.h@ must both be owned by \Q@this@. In addition, the \emph{elements} of \Q@this.path@ need to be owned by \Q@this@, since \Q@this.path.Conatains@ will read them. Note that \Q@this.h.pos@ also needs to be owned by \Q@this@, however since \Q@pos@ is declared as \Q@[Peer]@, if \Q@this@ owns \Q@this.h@, it also owns \Q@this.h.pos@. To fix the invariant, we will declare \Q@h@, \Q@path@, and the elements of \Q@path@ as \emph{reps} (i.e. they are owned by the containing object). Finally, since \Q@Move@ modifies \Q@this.h@, \Q@this.h@ needs to be made consistent, which requires that the owner (\Q@this@) be made invalid; this can be achieved by using an \Q@expose(this)@ statement. \Q@expose(this){@\emph{body}\Q@}@ marks \Q@this@ as invalid, executes \emph{body}, checks that the invariant of \Q@this@ holds, and then marks \Q@this@ valid again.
As we did with the \Q@Hamster@, we will simply take unowned \Q@h@ and \Q@path@ values, however we also need the elements of \Q@path@ to be unowned; since Spec\# has no \Q@[ElementsCaptured]@ annotation, we will require \Q@path@ to be unowned, and its elements (denoted by \Q@Owner.ElementProxy(path)@) to be owned by the same owner as \Q@path@ (which is \Q@null@).
\begin{lstlisting}
class Cage {
  [Rep] public Hamster h;
  [Rep, ElementsRep] List<Point> path;
	
  Cage([Captured] Hamster h, [Captured] List<Point> path)
    requires Owner.Same(Owner.ElementProxy(path), path);
  { this.h = h; this.path = path; }
	
  invariant this.path.Contains(this.h.pos);
  void Move() { 
    int index = this.path.IndexOf(this.h.pos);
    expose(this){this.h.pos=this.path[index%this.path.Count]; }} 
}
\end{lstlisting}

The above constructor now fails to verify, since Boogie is unconvinced that its precondition actually holds when we initialise \Q@this.path@. This is because the constructor for \Q@Object@ (the default base class if none is provided) is not marked as \Q@[Pure]@; since it is (implicitly) called upon entry to \Q@Cage@'s constructor, Boogie has no idea as to what memory could've mutated, and so it doesn't know whether the precondition still holds. The solution is to explicitly call it, but at the end of the constructor: \Q@{this.h = h; this.path = path; base();}@.

The above \Q@Cage@ code however does not work, since \Q@List@ operations, such as \Q@Contains@ and \Q@IndexOf@, will call the virtual \Q@Object.Equals@ method to compute equality of \Q@Point@s. However \Q@Object.Equals@ implements \emph{reference} equality, whereas we want \emph{value} equality.

\subheading{Defining Equality of \Q@Point@s}
The obvious solution in C\# is to just override \Q@Object.Equals@ accordingly, and let dynamic dispatch handle the rest:
\begin{lstlisting}
class Point {
  .. // as before
  override bool Equals(Object? o) {
    Point? that = o as Point;
    return that!=null && this.x == that.x && this.y == that.y;}
}
\end{lstlisting}

However this fails in Spec\# since \Q@Object.Equals@ is annotated with \Q@[Pure]@\\*\Q@[Reads(ReadsAttribute.Reads.Nothing)]@, and of course every overload of it must also satisfy this. The \Q@Reads@ annotations specifies that the method cannot read fields of \emph{any} object, not even the receiver, this makes overloading the method useless.

We resorted to making our own \Q@Equal@ method. Since it is called in \Q@Cage@'s invariant, Spec\# requires it to be annotated as \Q@[Pure]@, and either annotated with\\* \Q@[Reads(ReadsAttribute.Reads.Nothing)]@ or \Q@[Reads(ReadsAttribute.Reads.Owned)]@\\* (the default, if the method is \Q@[Pure]@). The latter annotation means it can only read fields of objects owned by the \emph{receiver} of the method, so a \Q@[Pure] bool Equal(Point that)@ method can read the fields of \Q@this@, but not the fields of \Q@that@. Of course this would make the method unusable in \Q@Cage@ since the \Q@Point@s we are comparing equality against do not own each other. As such, the simplest solution is to pass the fields of the other point to the method:
\begin{lstlisting}
[Pure] bool Equal(double x, double y) {
  return x == this.x && y == this.y;}
\end{lstlisting}

Sadly however this mean we can no longer use \Q@List@'s \Q@Contains@ and \Q@IndexOf@ methods, rather we have to expand out their code manually; making these changes takes us to the version we presented in Section \ref{s:intro}.

\lstset{language=FortyTwo}

\lstset{morekeywords={expose}}

\section{More Case Studies}
\label{s:MoreCaseStudies}
\subheading{Family}
The following test case was designed to produce a worst case in the number of invariant checks. We have a \Q!Family! that (indirectly) contains a list of \Q!parents! and \Q!children!. The \Q!parents! and \Q!children! are of type \Q!Person!. Both \Q!Family! and \Q!Person! have an invariant, the invariant of \Q!Family! depends on its contained \Q!Person!s.


\begin{lstlisting}
class Person { 
  final String name;
  Int daysLived;
  final Int birthday;
  Person(String name, Int daysLived, Int birthday) { .. }
  mut method Void processDay(Int dayOfYear) {
  	this.daysLived += 1;
    if (this.birthday == dayOfYear) {
    	Console.print("Happy birthday " + this.name + "!"); }}
  read method Bool invariant() {
    return !this.name.equals("") && this.daysLived >= 0 &&
      this.birthday >= 0 && this.birthday < 365; }
}
class Family { 
  static class Box { 
    mut List<Person> parents;
    mut List<Person> children;
    Box(mut List<Person> parents, mut List<Person> children){..}
    mut method Void processDay(Int dayOfYear) {
      for(Person c : this.children) { c.processDay(dayOfYear); }
      for(Person p : this.parents) { p.processDay(dayOfYear); }}
  }
  capsule Box box;
  Family(capsule List<Person> ps,capsule List<Person> cs) {
    this.box = new Box(ps, cs); }
  mut method Void processDay(Int dayOfYear) { 
    this.box.processDay(dayOfYear); }
  mut method Void addChild(capsule Person child) { 
    this.box.children.add(child); }
  read method Bool invariant() {
    for (Person p : this.box.parents) {
      for (Person c : this.box.children) {
        if (p.daysLived <= c.daysLived) { 
          return false; }}}
    return true; }
}
\end{lstlisting}
Note how we created a \Q!Box! class to hold the \Q!parents! and \Q!children!.
Thanks to this pattern, the invariant only needs to hold at the end of \Q!Family.processDay!, after all the \Q!parents! and \Q!children! have been updated. Thus \Q!Family.processDay! is atomic: it updates all its contained \Q!Person!s together.
Had we instead made the \Q!parents! and \Q!children! \Q!capsule! fields of \Q!Family!, the invariant would be required to also hold between modifying the two lists. This could cause problems if, for example, a child was updated before their parent.

\loseSpace
We have a simple test case that calls \Q!processDay! on a \Q!Family! $1{,}095$ ($3\times365$) times.
\begin{lstlisting}
// 2 parents (one 32, the other 34), and no children
var fam = new Family(List.of(new Person("Bob", 11720, 40),
    new Person("Alice", 12497, 87)), List.of());
    
for (Int day = 0; day < 365; day++) { // Run for 1 year
  fam.processDay(day);
}
for (Int day = 0; day < 365; day++) { // The next year
  fam.processDay(day);
  if (day == 45) {
    fam.addChild(new Person("Tim", 0, day)); }}

for (Int day = 0; r < 365; day++) { // The 3rd year
  fam.processDay(day);
  if (day == 340) {
    fam.addChild(new Person("Diana", 0, day)); }}
\end{lstlisting}

The idea is that everything we do with the \Q!Family! is a mutation; the \Q!fam.processDay! calls also mutate the contained \Q!Person!s.

This is a worst case scenario for our approach compared to visible state semantics since it reduces our advantages:
our approach avoids invariant checks when objects are not mutated
but in this example most operations are mutations; 
similarly, our approach prevents the exponential explosion of nested invariant checks\footnote{As happened in our GUI case study, see Section \ref{s:case-study}.} when deep object graphs are involved, but in this example the object graph of \Q!fam! is very shallow.
\loseSpace

We ran this test case using several different languages: L42 (using our protocol) performs $4{,}000$ checks, D and Eiffel perform $7{,}995$, and finally, Spec\# performs only $1{,}104$.

Our protocol performs a single invariant check at the end of each constructor,  \Q!processDay! and \Q!addChild! call (for both \Q!Person! and \Q!Family!). 

The visible state semantics of both D and Eiffel perform additional invariant checks at the beginning of each call to \Q!processDay! and \Q!addChild!.

The results for Spec\# are very interesting, since it performs less checks than L42.
This is the case since \Q!processDay! in \Q!Person! just does a simple field update, which in Spec\# do not invoke runtime invariant checks. Instead, Spec\# tries to statically verify that the update cannot break the invariant; if it is unable to verify this, it requires that the update be wrapped in an \Q!expose! block. 

Spec\# relies on the absence of arithmetic overflow, and performs runtime checks to ensure this%
\footnote{%
Runtime checks are enabled by a compilation option; when they fail, unchecked exceptions are thrown.%
}, as such the verifier concludes that the field increment in \Q!processDay! cannot break the invariant.
Spec\# is able to avoid some invariant checks in this case 
by relying on all arithmetic operations performing runtime overflow checks;
whereas integer arithmetic in L42 has the common wrap around semantics.





The annotations we had to add in the Spec\# version\footnote{The Spec\# code is in the artifact.} were similar to our previous examples, however since the fields of \Q!Person! all have immutable classes/types, we only needed to add the invariant itself. The \Q!Family! class was similar to our \Q!Cage! example (see section \ref{s:intro}), however in order to implement the \Q!addChild! method we were forced to do a shallow clone of the new child (this also caused a couple of extra runtime invariant checks). Unlike L42 however, we did not need to create a box to hold the \Q!parents! and \Q!children! fields, instead we wrapped the body of the \Q!Family.processDay! method in an \Q!expose (this)! block. In total we needed 16 annotations, worth a total of 45 tokens, this is worse than the code following our approach that we showed above, which has 14 annotations and 14 tokens.

\subheading{Spec\# Papers}
Their are many published papers about the pack/unpack methodology used by Spec\#. To compare against their expressiveness we will consider the three mains ones that introduced their methodology and extensions:
\begin{itemize}
	\item \emph{Verification of Object-Oriented Programs with Invariants:}~\cite{DBLP:journals/jot/BarnettDFLS04} this paper introduces their methodology. In their examples section (pages 41--47), they show how their methodology would work in a class heirarchy with \Q!Reader! and \Q!ArrayReader! classes. The former represents something that reads characters, whereas the latter is a concrete implementation that reads from an owned array. They extend this further with a \Q!Lexer! that owns a \Q!Reader!, which it uses to read characters and parse them into tokens. They also show an example of a \Q!FileList! class that owns an array of filenames, and a \Q!DirFileList! class that extends it with a stronger invariant. All of these examples can be represented in L42\footnote{Our encodings are in the artifact.}. The most interesting considerations are as follow:
	\begin{itemize}
		\item Their \Q!ArrayReader! class has a \Q!relinquishReader! method that `unpacks' the \Q!ArrayReader! and returns its owned array.
The returned array can then be freely mutated and passed around by other code.
However, afterwards the \Q!ArrayReader! will be `invalid', and so one can only call methods on it that do not require its invariant to hold. However, it may later be `packed' again (after its invariant is checked).
In contrast, our approach requires the invariant of all usable objects to hold.
We can still relinquish the array, but at the cost of making the \Q!ArrayReader! forever unreachable. This can be done by
 declaring \Q!relinquishReader! as a \Q!capsule method!, this works since our type modifier system guarantees that the receiver of such a method is not aliased, and hence cannot be used again. Note that Spec\# itself cannot represent the \Q!relinquishReader! method at all, since it does not provide explicit pack and unpack operations, rather its \Q!expose! statement performs both an unpack and a pack, thus we cannot unpack an \Q!ArrayReader! without repacking it in the same method.
		\item Their \Q!DirFileList! example inherits from a \Q!FileList! which has an invariant, and a final method, this is something their approach was specifically designed to handle. As L42 does not have traditional subclassing, we are unable to express this concept fully, but L42 does have code reuse via trait composition, in which case \Q!DirFileList! can essentially copy and paste the methods from \Q!FileList!, and they will automatically enforce the invariant of \Q!DirFileList!. 
	\end{itemize}

	\item \emph{Object Invariants in Dynamic Contexts:}~\cite{DBLP:conf/ecoop/LeinoM04} this paper shows how one can specify an invariant for a doubly linked list of \Q!int!s (which is an immutable value type). Unlike our protocol however, it allows the invariant of \Q!Node! to refer to sibling \Q!Node!s which are not owned/encapsulated by itself, but rather the enclosing \Q!List!. Our protocol can verify such a linked list\footnote{%
Our protocol allows for encoding this example, but
to express the invariant we would need to 
use reference equality, which the L42 language does not support.
} (since its elements are immutable), however we have to specify the invariant inside the \Q!List! class. We do not see this as a problem, as the \Q!Node! type is only supposed to be used as part of a \Q!List!, thus this restriction does not impact users of \Q!List!.
	
	\item \emph{Friends Need a Bit More: Maintaining Invariants Over Shared State:}~\cite{DBLP:conf/mpc/BarnettN04} this paper shows how one can verify invariants over interacting objects, where neither owns/contains the other. They have multiple examples which utilise the `subject/observer' pattern, where a `subject' has some state that an `observer' wants to keep track of. In their \Q!Subject!/\Q!View! example, \Q!View!s are created with references to \Q!Subject!s, and copies of their state. When a \Q!Subject!'s state is modified, it calls a method on its attached \Q!View!s, notifying them of this update. The invariant is that a \Q!View!'s copy of its \Q!Subject!'s state is up to date. Their \Q!Master!/\Q!Clock! example is similar, a \Q!Clock! contains a reference to a \Q!Master!, and saves a copy of the \Q!Master!'s time. The \Q!Master! has a \Q!Tick! method that increases its time, but unlike the \Q!Subject!/\Q!View! example, the \Q!Clock! is not notified. The invariant is that the \Q!Clock!'s time is never ahead of its \Q!Master!'s. Our protocol is unable to verify these interactions, because the
interacting objects are not immutable or encapsulated by each other.
\end{itemize}

\end{document}